\documentclass{LMCS}

\def\doi{8(3:14)2012}
\lmcsheading%
{\doi}
{1--32}
{}
{}
{Jul.~19, 2006}
{Sep.~12, 2012}
{}
 
\usepackage{enumerate,hyperref}
\usepackage[all]{xy} 
\usepackage{bbm} %Blackbord 
\usepackage{bm} %bold math
\usepackage{latexsym} 
\usepackage{amsmath} 
\usepackage{amssymb} 
\usepackage{amsthm} 
\usepackage{url} 
\usepackage{ifthen} 
 
\newcommand{\twobb}{\mathbbm{2}}   
\newcommand{\Nbb}{\mathbbm{N}}

\newcommand{\npskip}{\vspace{-\parskip}}

\newcommand{\pskip}{\par\vspace{2ex plus 0.5ex minus 0.2ex}\vspace{-\parskip}\noindent}

\renewcommand{\phi}{\varphi}

\newcommand{\biimp}{\ensuremath{\leftrightarrow}} 
\newcommand{\et}{\ensuremath{\wedge}} %\and is already a latex command 
 %\and is already a latex command 
 %\or is already a tex command 
 %\or is already a tex command 

  %cf. the def. of \notthrm 

  %cf. the def. of \notin in plain.tex 

 % only if 
 % if 
 % iff (si et seulement si) 
 % and 
 % big only if 

\newcommand{\too}{\longrightarrow}

 % such that 
\newcommand{\eps}{\varepsilon}

\newcommand{\acal}{\mathcal{ A}}  
\newcommand{\bcal}{\mathcal{ B}}  
\newcommand{\ccal}{\mathcal{ C}}  
\newcommand{\dcal}{\mathcal{ D}}

\newcommand{\hcal}{\mathcal{ H}}  
\newcommand{\ical}{\mathcal{ I}}

\newcommand{\lcal}{\mathcal{ L}}

\newcommand{\pcal}{\mathcal{ P}}

\newcommand{\scal}{\mathcal{ S}}  
\newcommand{\tcal}{\mathcal{ T}}

\newcommand{\xcal}{\mathcal{ X}}

\newcommand{\Poset}{\mathsf{Poset}} 
 
\newtheorem{theorem}{Theorem}[section] 
\newtheorem{theorem-conj}{Theorem (Conjecture)}[section] 
 
\newtheorem{lemma-conj}[theorem]{Lemma (Conjecture)}

\theoremstyle{definition} 
\newtheorem{definition}[theorem]{Definition} 
\newtheorem{example}[theorem]{Example}

\theoremstyle{remark}

\newcommand{\id}{\mathrm{id}} 
\newcommand{\iso}{\cong} 
\newcommand{\op}{^{\mathrm{op}}} 
 
\newcommand{\Set}{\mathsf{Set}} 
\newcommand{\Fin}{\mathsf{Fin}} 
\newcommand{\Ab}{\mathsf{Ab}} 
\newcommand{\Vect}{\mathsf{Vect}}

\newcommand{\BA}{\mathsf{BA}} 
\newcommand{\DL}{\mathsf{DL}} 
 
\newcommand{\CABA}{\mathsf{CABA}} 
\newcommand{\Stone}{\mathsf{Stone}}

\newcommand{\Ind}{\mathsf{Ind}} 
\newcommand{\Pro}{\mathsf{Pro}} 
 
\newcommand{\Alg}{\mathsf{Alg}}

\newcommand{\Coalg}{\mathsf{Coalg}} % usage: \Coalg(T) 
 
\newcommand{\Id}{\mathit{Id}} 
\newcommand{\Pow}{\mathit{\pcal}}

\theoremstyle{remark}

\theoremstyle{definition}

\newcommand{\adj}{\mathrel{\dashv}}

\renewcommand{\rho}{\varrho}

\newcommand{\onebb}{\mathbbm{1}}

\newcommand{\fp}{{{\mathit{fp}}}} 
\newcommand{\sfp}{{{\mathit{sfp}}}}

\newcommand{\lsem}{\mathopen{[\![}}    
\newcommand{\rsem}{\mathclose{]\!]}}    
\newcommand{\sem}[1]{\lsem #1 \rsem}

%\date{\today}

\begin{document} 
 
\title[Strongly Complete Logics for Coalgebras]{Strongly Complete Logics for Coalgebras} 
 
\author[Kurz]{Alexander Kurz\rsuper a}
\address{{\lsuper a}University of Leicester, UK}
\email{kurz@mcs.le.ac.uk}
\thanks{{\lsuper a}Supported by Nuffield Foundation Grant NUF-NAL04.}

\author[Rosick{\'y}]{Ji\v{r}\'\i{} Rosick{\'y}\rsuper b}
\address{{\lsuper b}Masaryk University, Brno, Czech Republic}
\email{rosicky@math.muni.cz} 
\thanks{{\lsuper b}Supported by the Ministry
  of Education of the Czech Republic under the project 1M0545.} 
  
\begin{abstract}  
  \noindent Coalgebras for a functor model different types of
  transition systems in a uniform way. This paper focuses on a uniform
  account of finitary logics for set-based coalgebras. In particular,
  a general construction of a logic from an arbitrary set-functor is
  given and proven to be strongly complete under additional
  assumptions. We proceed in three parts.

  Part I argues that sifted colimit preserving functors are those
  functors that preserve universal algebraic structure. Our main
  theorem here states that a functor preserves sifted colimits if and
  only if it has a finitary presentation by operations and equations.
  Moreover, the presentation of the category of algebras for the
  functor is obtained compositionally from the presentations of the
  underlying category and of the functor.
  
  Part II investigates algebras for a functor over ind-completions
  and extends the theorem of J{\'o}nsson and Tarski on canonical
  extensions of Boolean algebras with operators to this setting.
  
  Part III shows, based on Part I, how to associate a finitary
  logic to any finite-sets preserving functor $T$. Based on Part II we
  prove the logic to be strongly complete under a reasonable condition
  on $T$.
\end{abstract}  
 
\subjclass{F.3.2; F.4.1}
\keywords{coalgebras, modal logic, Stone duality, algebraic theories, sifted
colimits, presentation of functors}

\maketitle \newpage
  
\tableofcontents

\section{Introduction} 

\noindent This paper can be read as consisting of three independent
parts or it can be read with a unifying story in mind. Since the three
parts may be of interest to different readers, and require somewhat
different prerequisites, we keep them separated and try to make them
reasonably self-contained. On the other hand, the story will be of
interest to some readers and so we sketch it in this introduction.  We
begin with a brief overview of the three parts.

\medskip\noindent Part I presents an investigation in categorical
universal algebra. In universal algebra, a variety is a category that
has a finitary presentation by operations and equations.  We
investigate functors on varieties that have finitary presentations by
operations and equations. We will show that an endofunctor on a
variety has such a presentation if and only if the functor preserves
sifted colimits.
 
\medskip\noindent Part II studies a topic in Stone duality. Given a
small finitely complete and cocomplete category $\ccal$, one obtains a
dual adjunction between the ind-completions $\Ind\ccal$ and
$\Ind(\ccal\op)$. We prove a J\'onsson-Tarski style representation
theorem showing how to represent algebras over $\Ind\ccal$ as duals of
coalgebras over $\Ind(\ccal\op)$.

\medskip\noindent Part III investigates how to associate in a uniform
way to a set-functor $T$ a logic for $T$-coalgebras. Using the results
from Part I and Part II, we show that, under a mild additional
condition, any finite set-preserving functor $T$ has a strongly
complete finitary modal logic.

\bigskip Our story starts with the idea of universal coalgebra as a
general theory of systems, due to Rutten \cite{rutten:uc-j}, which
allows us to deal with issues such as bisimilarity, coinduction, etc
in a uniform way. A natural question then is whether something similar
can be done for logics of coalgebras. The first answer to this was
Moss's seminal Coalgebraic Logic~\cite{moss:cl}. Moss's logic is
parametric in $T$, the basic idea being to take $T$ itself as an
operation to construct formulas: if $\Phi$ is a set of formulas, then
$T(\Phi)$ is a set of formulas.

\medskip\noindent Following on from Moss~\cite{moss:cl}, attention
turned to the question of how to set up logics for coalgebras in which
formulas are built according to a more conventional scheme: If $\Box$
is a unary operation symbol, or `modal operator', and $\phi$ is a formula,
then $\Box\phi$ is a formula. After some work in this direction,
see eg \cite{kurz:cmcs98-j,roessiger:ml98-j,jacobs:many-sorted},
Pattinson~\cite{pattinson:cml-j} showed that such languages arise from
modal operators given by natural transformations $2^X\to 2^{TX}$,
which are called predicate liftings of
$T$. Schr\"oder~\cite{schroeder:fossacs05} investigated the logics
given by all predicate liftings of finite arity and showed that these
logics are expressive for finitary functors $T$.

\medskip\noindent Another approach is based on Stone duality. In the
context of coalgebraic logic, it was first advocated by
\cite{bons-kurz:fossacs05}, but is based on the ideas of domain theory
in logical form \cite{abramsky:dtlf,abra-jung:dt}. We will explain it
briefly here.

\bigskip We think of Stone duality
\cite{johnstone:stone-spaces,abra-jung:dt} as relating a category of
algebras $\acal$, representing a propositional logic, to a category of
topological spaces $\xcal$, representing the state-based models of the
logic. The duality is provided by two contravariant functors $P$ and
$S$,
\begin{equation} 
\label{eins} 
\xymatrix{ 
  {\ \xcal} \ar@/^/[r]^{P} & {\acal \ } \ar@/^/[l]^{S} . 
   } 
\end{equation} 
$P$ maps a space $X$ to a propositional theory and $S$ maps a
propositional theory to its `canonical model'. 
For the moment let us assume that $\acal$ and $\xcal$ are dually
equivalent, as it is the case when $\xcal$ is the category $\Stone$ of
Stone spaces and $\acal$ is the category $\BA$ of Boolean
algebras. This means that, from an abstract categorical point of view,
the two categories are the same, up to reversal of arrows. But this
ignores the extra structure which consists of both categories
having a forgetful functor to $\Set$, with $\xcal\to\Set$ not being dual
to $\acal\to\Set$. An object of $\xcal$ specifies a \emph{set} of
states that serves as our semantic domain; an object of $\acal$
specifies a \emph{set} of propositions, which we use to specify
properties of spaces. More specifically, we will assume that $\acal$ is a
variety, that is, $\acal$ is isomorphic to a category of algebras given by
operations and equations, the operations being our logical connectives
and the equations the logical axioms. Or, equivalently, we can say
that $\acal\to\Set$ is finitary and monadic.

\medskip\noindent To extend a basic Stone duality as above to
coalgebras over $\xcal$, we consider, as Abramsky did in his Domain
Theory in Logical Form \cite{abramsky:dtlf}, the dual $L$ of $T$:
\begin{equation}\label{zwei} 
  \xymatrix{ 
    {\ \xcal} \ar@(dl,ul)[]^{T} \ar@/^/[r]^{P} & {\acal \ } \ar@/^/[l]^{S} 
    \ar@(dr,ur)[]_{L} }  \quad\quad\quad\quad LP\iso PT
\end{equation} 
Then the category of $L$-algebras is dual to the category of
$T$-coalgebras and the initial $L$-algebra provides a propositional
theory characterising $T$-bisimilarity. Moreover, if $L$ can be
presented by generators and relations, or rather operations and
equations, one inherits a proof system from equational logic
which is sound and strongly complete. Thus, \emph{logics for
  $T$-coalgebras arise from presentations of the dual of $T$ by
  operations and equations}. Part I characterises those
functors $L$ on varieties $\acal$ that have a finitary presentation by operations and equations.

\medskip The approach indicated in Diagram~\ref{zwei} can be applied
to set-coalgebras, but as the dual of $\Set$ is the category $\CABA$
of \emph{complete} atomic Boolean algebras, the corresponding logics
would become infinitary.  Hence, being interested in finitary logics,
we are led to consider two Stone dualities
\begin{equation}\label{drei} 
  \xymatrix{ 
    {\Stone }  \ar@/^/[r]^{} %\ar@/^/[d]^{} 
    &  
      {\BA} \ar@/^/[l]^{}\ar@(dr,ur)[]_{L} %\ar@/^/[d]^{} 
    \\ 
    {\Set} \ar@(dl,ul)[]^{T} \ar@/^/[r]^{} %\ar@/^/[u]^{} 
    &  
      {\CABA} \ar@/^/[l]^{}%\ar@/^/[u]^{} 
}  
\end{equation} 
The upper row is the duality between Stone spaces and Boolean
algebras, accounting for (classical finitary) propositional logic. 
The lower row is the duality where our set-based $T$-coalgebras
live. How can the two be related?

\medskip\noindent The crucial observation is the following. $\BA$ is
the ind-completion of the category $\BA_\omega$ of finite Boolean
algebras, that is, the completion of finite Boolean algebras under
filtered colimits; $\Set$ is the ind-completion of the category
$\Set_\omega$ of finite sets; and finite sets are dually equivalent to
finite Boolean algebras. In other words, $\Set\op$ is the
pro-completion of finite Boolean algebras, that is, the completion of
finite Boolean algebras under cofiltered limits.
\begin{equation}
  \xymatrix@C=15pt{ 
    {\BA} \ar@(dl,ul)[]^{L \ } 
    \ar@/_/[rr]_{S}  & &  
    {\Set\op}\ar@/_/[ll]_{P} \ar@(dr,ur)[]_{\ T\op}  
    \\ 
    & \ar[ul]^{\hat{(-)}} \BA_\omega\simeq\Set_\omega\op  \ar[ur]_{\bar{(-)}} & 
  } 
\end{equation} 
This gives us a systematic link between Boolean propositional logic
and its set-theoretic semantic. It is extended to modal logics and
their coalgebraic semantics by a natural transformation
\[\delta:LP \to PT \] lifting $P$ to a functor $\tilde
P:\Coalg(T)\to\Alg(L)$. Not in general, but in a large number of
important examples, we also obtain a (not necessarily natural)
transformation
\[h:SL\to TS\] giving rise to a map on objects $\tilde
S:\Alg(L)\to\Coalg(T)$.  Part II shows that then every $L$-algebra $A$
can be embedded into the $P$-image of a $T$-coalgebra, namely, into
$\tilde P\tilde S A$.  This extends the J\'onsson-Tarski theorem from
Kripke frames \cite[Thm 5.43]{brv:ml} to coalgebras and shows that the
logics $L$ are canonical in the sense that all formulas hold in the
underlying frame of the canonical model $\tilde S I$, where $I$ is the
Lindenbaum-algebra of $L$.

\medskip Starting from an arbitrary functor $T:\Set\to\Set$, Part III shows first how to define a suitable  $L$.  From Part I, we know that $L$ has a
presentation by operations and equations and therefore corresponds to
a `concrete' modal logic in the traditional sense. This is detailed in
Section~\ref{sec:ml-abs-concr}. Section~\ref{sec:ml-set} then applies
the J\'onsson-Tarski theorem from Part II to obtain strong
completeness results for modal logics for coalgebras.

\bigskip\noindent \textbf{Further related work } Unary predicate
liftings and a criterion for weak completeness go back to Pattinson
\cite{pattinson:cml-j}. The observation that all logics given by
predicate liftings correspond to a functor $L$ on $\BA$ was made in
\cite{kkp:cmcs04}, with the approach of functorial modal logic going
back to \cite{bons-kurz:fossacs05}. The logic of all predicate
liftings of finite arity was introduced in
Schr\"oder~\cite{schroeder:fossacs05}. Our notion of a presentation of
a functor and the fact that such presentations give rise to modal
logics is from \cite{bons-kurz:fossacs06}. The J\'onsson-Tarski
theorem of this paper generalises the corresponding theorem in
\cite{kkp:calco05}. The process of taking a finite set preserving
functor and extending it to $\BA$, and hence to $\Stone$, is related
to a construction in Worrell~\cite{worrell:final-sequence} where a
set-functor is lifted to complete ultrametric
spaces. Klin~\cite{klin:mfps07} generalises the expressivity result of
\cite{schroeder:fossacs05} working essentially with the same
adjunction as in Diagram~\ref{zwei}.

Since an earlier version of the paper was made available in June 2006, the field
continued to develop quickly. For example,
Schr\"oder~\cite{schroeder:finite-model-construction} improves on
\cite{pattinson:cml-j} by showing that complete axiomatisations always
exist and, moreover, that completeness holds wrt finite models.
\cite{patt-schr:algebraic-semantics} uses Stone duality and algebraic
techniques to derive conditions for the finite model property of
logics with additional non-rank-1 axioms. Schr\"oder and
Pattinson~\cite{schr-patt:strong-completeness} push the strong
completeness result of this paper further and add several important
examples.

\pskip\textbf{Acknowledgements } The first author is grateful to
Adriana Balan, Nick Bezhanishvili, Marcello Bonsangue, Corina C\^\i
rstea, Bartek Klin, Clemens Kupke, Tadeusz Litak, Rob Myers, Dirk
Pattinson, Daniela Petri{\c s}an, Katsuhiko Sano, Vincent Schmitt,
Lutz Schr\"oder, Ji\v{r}\'\i Velebil, Yde Venema, and James Worrell who all
contributed to aspects of this work.

\section{Introduction to Part I: Algebras and
  Varieties}\label{sec:prel}
\noindent
There is a general agreement that algebras over a category $\acal$ are
described by means of a monad $M:\acal\to\acal$ (see \cite{ML}). In
the case when $\acal$ is the category $\Set$ of sets, the category
$\Set_M$ of algebras over a monad $M:\Set\to\Set$ always has a
\emph{presentation} in the sense that there exists a signature
$\Sigma$ (allowing infinite arities) and equations $E$ such that $\Set_M$ is concretely isomorphic
to the category $\Alg(\Sigma,E)$ of $(\Sigma,E)$-algebras. Both
$\Sigma$ and $E$ can be proper classes but the characteristic property
is that free $(\Sigma,E)$-algebras always exist. This allows compact
Hausdorff spaces and complete atomic Boolean algebras but eliminates
complete Boolean algebras.  The important special case is when a monad
$M$ has a rank which means that it preserves $\lambda$-filtered
colimits for some regular cardinal $\lambda$. It corresponds to the
case when $\Sigma$ is a set (and $\lambda$ is greater than arities of
all $\Sigma$-operations).  In particular, algebras over a monad
preserving filtered colimits (such functors are called
\emph{finitary}) then correspond to classical universal algebras.

Alternatively and equivalently, classical universal algebras can be
described by algebraic theories (see Lawvere \cite{lawvere:thesis}).
Recall that an \emph{algebraic theory} is a category $\tcal$ whose
objects are integers $0,1,2,\dots$ and such that $n$ is a product of
$n$ copies of $1$ for each $n=0,1,\dots$. It means that $\tcal$ has
finite products and, in particular, a terminal object $0$.
Intuitively, arrows $n\to 1$ represent $n$-ary terms and commuting
diagrams represent equations. An algebraic theory $\tcal$ determines
the category $\Alg(\tcal)$ which is the full subcategory of
$\Set^\tcal$ consisting of all functors $\tcal\to\Set$ preserving
finite products. The underlying functor $\Alg(\tcal)\to\Set$ is given
by the evaluation at $1$. A category is called a \emph{variety} if it
is equivalent to $\Alg(\tcal)$ for some algebraic theory $\tcal$.

Given a category $\acal$ with finite products and an algebraic theory
$\tcal$, one can still define $\tcal$-algebras in $\acal$ as finite
product preserving functors $\tcal\to\acal$ and consider the category
$\Alg_\acal(\tcal)$ of these algebras. In general, there is no
guarantee that this category is monadic over $\acal$ because free
algebras do not need to exist. But, whenever $\acal$ is locally
presentable (see \cite{ar}), $\Alg_\acal(\tcal)$ is always monadic and
the corresponding monad $M:\acal\to\acal$ has a rank. In particular,
when $\acal$ is locally finitely presentable, $M$ is finitary. But one
cannot expect that each finitary monad is determined by an algebraic
theory. A typical example is when $\acal$ is the category $\Vect_P$ of
vector spaces over a field $P$. Important binary operations are not
given by linear maps $V\times V\to V$ but by bilinear ones. So, linear
universal algebra deals with operations $V\otimes V\to V$. The
corresponding ``tensor algebraic theories'' were introduced by Mac Lane
\cite{ML1} under the names of PROP's and PACT's and led to the concept
of an \emph{operad}. We recommend \cite{L} to learn about linear
universal algebra. One can describe algebras over a monad
$M:\acal\to\acal$ for an arbitrary category $\acal$ by means of
``operations and equations'' (see \cite{Li}). However, arities of
operations are not natural numbers but objects of $\acal$. For
instance, by taking the category of posets, binary operations whose
arity is a two-element chain are defined only for pairs of comparable
elements.  It is just the special feature of $\Set$ that, besides
being cartesian closed, finite sets are coproducts of $1$ which makes
algebraic theories powerful enough to cover finitary monads.

Filtered categories are precisely categories $\dcal$ such that
colimits over $\dcal$ commute with finite limits in $\Set$ (see, e.g.,
\cite{ar}). There is also a characterization of filtered categories
independent of sets -- a category $\dcal$ is filtered if and only if
the diagonal functor $\triangle:\dcal\to\dcal^\ical$ is final for each
finite category $\ical$ (see \cite{GU}). It makes filtered colimits
belong more to the ``doctrine of finite limits'' than to that of
finite products.  It appears as the fact that algebras over a finitary
monad do not need to ``look like algebras''.  For example, the
category of torsion free abelian groups is the category of algebras
for the monad $M:\Ab\to\Ab$ on the category $\Ab$ of abelian groups
given by the reflection to torsion free ones.  This monad is finitary
because torsion free groups are closed under filtered colimits in
$\Ab$ but to be torsion free is not equationally definable. It would
be more appropriate to consider categories $\dcal$ such that colimits
over $\dcal$ commute with finite products. These categories are called
sifted and are characterized by the property that the diagonal
functor $\triangle:\dcal\to\dcal\times\dcal$ is final (see
\cite{adam-rosi:sifted}). 

\medskip\noindent Explicitely, a category $\dcal$ is \textbf{sifted}, if it is non-empty and for all objects $A,B\in\dcal$ the category $\mathsf{Cospan}_\dcal(A,B)$ of cospans is connected. Here, $\mathsf{Cospan}_\dcal(A,B)$ has as objects pairs of arrows $(A\to C, B\to C)$ and arrows $(A\stackrel{a}{\to} C, B\stackrel{b}{\to} C)\to(A\stackrel{a'}{\to} C', B\stackrel{b'}{\to} C')$ are given by arrows $f:C\to C'$ such that $f\circ a=a',f\circ b=b'$.  A category is connected if it is non-empty and cannot be decomposed into a disjoint union (coproduct) of two non-empty subcategories. 

\medskip\noindent Each filtered category is sifted but there
are sifted categories which are not filtered -- the most important
example are reflexive pairs (a parallel pair of morphisms $f,g$ is
reflexive if there is $t$ with $ft=gt=id$).  Another special feature
of sets is that any finitary functor $\Set\to\Set$ preserves sifted
colimits. But it is not true for $\Ab$ -- the torsion free monad above
does not preserve sifted colimits. The consequence is that torsion
free groups do not form a variety of universal algebras.

We can define algebras over an arbitrary functor $L:\acal\to\acal$; an
$L$-\emph{algebra} $(A,\alpha)$ is a pair consisting of an object $A$
and a morphism $\alpha:LA\to A$.  In the case when $L$ is a monad,
these $L$-algebras are more general than algebras over a monad $L$
because the latter have to satisfy some equations. Morphisms
$(A,\alpha)\to(A',\alpha')$ are morphisms $f:A\to A'$ such that
$f\circ\alpha=\alpha'\circ Lf$. The resulting category of $L$-algebras
is denoted by $\Alg(L)$. Like in the case of general equational
theories $(\Sigma,E)$, free $L$-algebras do not need to exists. If
they exist, the category $\Alg(L)$ is monadic over $\acal$ with
respect to the forgetful functor $\Alg(L)\to\acal$ sending an
$L$-algebra $(A,\alpha)$ to $A$. Given a functor $L:\acal\to\acal$,
algebras over $L^{\op}:\acal^{\op}\to\acal^{\op}$ are called
$L$-\emph{coalgebras}.
 
\section{Sifted Colimit Preserving Functors} 

The concept of a locally finitely presentable category stems from that
of a filtered colimit, i.e., belongs to the doctrine of finite
limits. Recall that an object $A$ of a category $\acal$ is finitely
presentable if its hom-functor $\hom(A,-):\acal\to\Set$ preserves
filtered colimits.  A category $\acal$ is locally finitely presentable
if it is cocomplete and has a set $\xcal$ of finitely presentable
objects such that each object of $\acal$ is a filtered colimit of
objects from $\xcal$.  By changing the doctrine from finite limits to
finite products, we replace filtered colimits by sifted ones. We say
that an object $A$ is \emph{strongly finitely presentable} if its
hom-functor $\hom(A,-):\acal\to\Set$ preserves sifted colimits. A
category $\acal$ is \emph{strongly locally finitely presentable} if it
is cocomplete and has a set $\xcal$ of strongly finitely presentable
objects such that each object of $\acal$ is a sifted colimit of
objects from $\xcal$. These categories were introduced in
\cite{adam-rosi:sifted} where it was shown that they are precisely
categories of algebras over many-sorted algebraic theories in
$\Set$. Recall that a \emph{many-sorted algebraic theory} $\tcal$ is a
small category with finite products and a $\tcal$-algebra in $\Set$ is
a functor $\tcal\to\Set$ preserving finite products. In particular,
each variety is strongly locally finitely presentable. (Note that our
usage differs from that of \cite{ar} where varieties are given by
many-sorted algebraic theories.) While finitely presentable objects in
$\Alg(\tcal)$ are algebras finitely presentable in a usual sense,
i.e., given by a finite set of generators subjected to a finite set of
equations, strongly finitely presentable algebras are precisely
retracts of finitely generated free algebras, i.e., finitely
presentable projective algebras. An important fact is that each
finitely presentable algebra is a reflexive coequalizer of strongly
finitely presentable ones (see
\cite[2.3.(2)]{adam-rosi:sifted}). Every strongly locally finitely
presentable category $\acal$ is locally finitely presentable and has,
up to isomorphism, only a set of strongly finitely presentable
objects. We will denote by $\acal_{\sfp}$ the corresponding full
subcategory of $\acal$. In the same way, $\acal_{\fp}$ denotes a
representative small full subcategory of finitely presentable objects.

Given a strongly locally finitely presentable category $\acal$ and a
category $\bcal$ having sifted colimits then a sifted colimit
preserving functor $H:\acal\to\bcal$ is fully determined by its values
on strongly finitely presentable objects. In fact, $\acal$ is a free
completion of $\acal_{\sfp}$, i.e., each functor
$\acal_{\sfp}\to\bcal$ extends to a functor $\acal\to\bcal$ (see
\cite{adam-rosi:sifted}). In particular, it applies to functors
$\acal\to\acal$. In analogy to \cite{ar}, Remark 2.75, we can prove
the following basic result.

\begin{thm}\label{thm:sfp} 
  Let $\acal$ be a strongly locally finitely presentable category and
  $L:\acal\to\acal$ preserve sifted colimits. Then $\Alg(L)$ is
  strongly locally finitely presentable.
\end{thm}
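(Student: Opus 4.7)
The strategy, following the analogue of \cite[Remark 2.75]{ar} in the sifted doctrine, is to verify the three defining conditions for $\Alg(L)$ to be strongly locally finitely presentable: cocompleteness, a small set of strongly finitely presentable objects, and that every object is a sifted colimit of these. The natural candidates for the generating set are the free $L$-algebras $FA$ on objects $A\in\acal_{\sfp}$.

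First I would set up the adjunction and cocompleteness. Since every filtered category is sifted, $L$ preserves filtered colimits, so the standard $\omega$-chain construction produces free $L$-algebras and equips the forgetful functor $U:\Alg(L)\to\acal$ with a left adjoint $F$; by the usual result for functor-algebras, $\Alg(L)$ is then monadic over $\acal$ with respect to the induced monad $M=UF$. Because $L$ preserves sifted colimits, $U$ creates them (the algebra structure on $\colim UD$ is forced by the isomorphism $L(\colim UD)\iso\colim LUD$), and combined with monadicity this yields cocompleteness of $\Alg(L)$. Next, for $A\in\acal_{\sfp}$ the natural bijection $\hom_{\Alg(L)}(FA,-)\iso\hom_\acal(A,U-)$ exhibits the left-hand hom-functor as a composite of two sifted-colimit-preserving functors, so each $FA$ is strongly finitely presentable in $\Alg(L)$; since $\acal_{\sfp}$ is essentially small, $F(\acal_{\sfp})$ is a small candidate generating family.

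The main obstacle is the third condition. By monadicity, every $L$-algebra $(X,\xi)$ appears canonically as a reflexive coequalizer $FMX\rightrightarrows FX\to(X,\xi)$ in $\Alg(L)$, and reflexive coequalizers are themselves sifted colimits. On the other hand, since $\acal$ is strongly locally finitely presentable, both $X$ and $MX$ are sifted colimits in $\acal$ of objects in $\acal_{\sfp}$, and applying the left adjoint $F$ (which preserves all colimits) turns these into sifted-colimit presentations of $FX$ and $FMX$ over diagrams in $F(\acal_{\sfp})$. Assembling the reflexive coequalizer together with these two sifted-colimit presentations of its legs expresses $(X,\xi)$ as an iterated sifted colimit of objects in $F(\acal_{\sfp})$, and by the closure of sifted colimits under iteration in the strongly locally finitely presentable setting (cf.\ \cite{adam-rosi:sifted}) this collapses to a single sifted colimit of the required form.

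With the three ingredients verified, $\Alg(L)$ meets the characterization of a strongly locally finitely presentable category, completing the proof.
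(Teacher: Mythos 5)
Your overall strategy (free algebras on $\acal_{\sfp}$ as the candidate generators, strong finite presentability of $FA$ via the adjunction isomorphism $\hom(FA,-)\cong\hom(A,U-)$ composed with the sifted-colimit-preserving $U$, cocompleteness via the free monad) is sound and matches the paper's setup. The gap is in your third step. Expressing $(X,\xi)$ as the reflexive coequalizer of $FMX\rightrightarrows FX$ and then writing each leg as a sifted colimit of objects of $F(\acal_{\sfp})$ does \emph{not} produce an ``iterated sifted colimit'' in any form to which a closure-under-iteration result applies: what you have is a colimit over a (sifted) shape whose \emph{vertices} are themselves sifted colimits, together with two parallel arrows and a section between those vertices that live only at the level of the colimiting objects, not at the level of the approximating diagrams. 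To collapse this you would first have to lift the reflexive pair to a sifted diagram of reflexive pairs with vertices in $F(\acal_{\sfp})$ --- using strong finite presentability of the $FB_j$ to factor $FB_j\to FMX\to FX$ through some $FA_i$, making the factorizations compatible with the section, and then running a cofinality/interchange argument. That is precisely the hard content of the theorem, and the phrase ``by the closure of sifted colimits under iteration \dots\ this collapses'' with a pointer to \cite{adam-rosi:sifted} does not discharge it; no single cited statement there covers this configuration.

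The paper avoids this difficulty entirely. It first gets local finite presentability of $\Alg(L)$ from \cite[Remark 2.75]{ar}, then observes that every algebra $X$ receives a regular epi $FA\to X$, that $FA$ is a sifted colimit of free algebras on strongly finitely presentable objects and hence receives a regular epi from their coproduct (the canonical map $\coprod_i FA_i\to\colim_i FA_i$ is a regular epi), so that the composite exhibits the $FA_i$ as a \emph{strong generator} consisting of strongly finitely presentable objects; it then invokes the characterization of \cite{CRV}, which is exactly the result that converts ``strong generator of strongly finitely presentable objects'' into ``every object is a sifted colimit of them.'' If you want to keep your direct route, you need to either prove the lifting-and-cofinality step sketched above or replace your final paragraph by the strong-generator argument plus the citation of \cite{CRV}.
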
 

\begin{proof} 
  Since $\acal$ is strongly locally finitely presentable and $L$
  preserves filtered colimits, $\Alg(L)$ is locally finitely
  presentable (see \cite[Remark 2.75]{ar}). Hence the forgetful
  functor $U:\Alg(L)\to\acal$ has a left adjoint $F$ (see \cite{ar},
  1.66). Thus $U$ is monadic, i.e., each $L$-algebra $X$ admits a
  regular epimorphism $e:FA\to X$ from some free $L$-algebra. Since
  $\acal$ is strongly locally finitely presentable, $A$ is a sifted
  colimit of strongly finitely presentable objects. Hence $FA$ is a
  sifted colimit of free algebras over strongly finitely presentable
  objects. Thus there is a regular epimorphism from a coproduct of
  free algebras over strongly finitely presentable objects to
  $FA$. Consequently, there is a strong epimorphism from such a
  coproduct to $X$, which means that free algebras over strongly
  finitely presentable objects form a strong generator of
  $\Alg(L)$. By \cite{CRV}, $\Alg(L)$ is strongly locally finitely
  presentable.
\end{proof}
 
\begin{thm}\label{thm:variety} 
  Let $\acal$ be a variety and $L:\acal\to\acal$ preserve sifted 
  colimits. Then $\Alg(L)$ is a variety. 
\end{thm}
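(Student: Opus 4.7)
By Theorem~\ref{thm:sfp}, $\Alg(L)$ is already strongly locally finitely presentable, so the plan is to upgrade this to a single-sorted presentation. A strongly locally finitely presentable category is a variety precisely when its strongly finitely presentable objects are retracts of finite coproducts of a single object (single-sorted versus many-sorted algebraic theories, cf.~\cite{adam-rosi:sifted}). It therefore suffices to exhibit one object $G\in\Alg(L)_{\sfp}$ such that every object of $\Alg(L)_{\sfp}$ is a retract of a finite coproduct of copies of $G$.

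Let $F_\acal\dashv U_\acal:\acal\to\Set$ witness $\acal$ as a variety, and let $F\dashv U:\Alg(L)\to\acal$ be the monadic adjunction produced in the proof of Theorem~\ref{thm:sfp}. I take $G:=F(F_\acal(1))$. The first step is to observe that $U$ creates sifted colimits: since $L$ preserves them, the $L$-algebra structure on the $\acal$-colimit of any sifted diagram in $\Alg(L)$ is uniquely determined and exhibits that colimit in $\Alg(L)$. Via the adjunction isomorphism $\hom_{\Alg(L)}(FA,-)\iso\hom_\acal(A,U-)$, this forces $FA\in\Alg(L)_{\sfp}$ whenever $A\in\acal_{\sfp}$; in particular $G\in\Alg(L)_{\sfp}$.

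Next I would show every $X\in\Alg(L)_{\sfp}$ is a retract of a finite coproduct of copies of $G$. The proof of Theorem~\ref{thm:sfp} already supplies $\{FA:A\in\acal_{\sfp}\}$ as a strong generator of $\Alg(L)$, and the previous step ensures its objects are sfp. In a strongly locally finitely presentable category, sfp objects are then retracts of finite coproducts of such generators, so there exist $A_1,\dots,A_k\in\acal_{\sfp}$ with $X$ a retract of $\coprod_{i\le k}FA_i$. Since $\acal$ is a variety, each $A_i$ is itself a retract of $F_\acal(n_i)=n_i\cdot F_\acal(1)$. Retracts are preserved by every functor, and the composite $F\circ F_\acal:\Set\to\Alg(L)$ preserves coproducts (being a left adjoint), so each $FA_i$ is a retract of $n_i\cdot G$; combining these, $X$ is a retract of $\bigl(\textstyle\sum_i n_i\bigr)\cdot G$, as required.

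The main technical point is the interplay between the two hypotheses: sifted-colimit preservation of $L$ is what promotes free $L$-algebras on sfp $\acal$-objects to sfp $L$-algebras, while the variety hypothesis on $\acal$ is what collapses all these sfp-indexed generators down to the single copy $G=F(F_\acal(1))$.
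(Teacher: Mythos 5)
Your proof is correct and follows essentially the same route as the paper: both arguments take $G=F(F_\acal(1))$ (the paper's $F(S_1)$) as the single generator, use that every strongly finitely presentable object of $\acal$ is a retract of a finite coproduct of copies of $F_\acal(1)$, and transport this along $F$ using the fact that the sfp objects of $\Alg(L)$ are retracts of finite coproducts of free algebras on sfp objects of $\acal$ (the paper cites \cite{CRV} for this step, which you reprove via $U$ creating sifted colimits). The extra detail you supply on why $FA$ is sfp is a welcome expansion of what the paper leaves implicit, but it does not change the structure of the argument.
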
 
\begin{proof}
  Since $\acal$ is a variety, there is an object $S_1\in\acal_{\sfp}$
  such that each object of $\acal_{\sfp}$ is a retract of a finite
  coproduct of copies of $S_1$ ($S_1$ is a free algebra with one
  generator). Hence each object of $F(\acal_{\sfp})$ is a retract of
  finite coproducts of copies of $F(S_1)$. By \cite{CRV}, the closure
  of $F(\acal_{\sfp})$ under finite coproducts is the set $\xcal$ from
  the definition of a strongly locally finitely presentable category.
  Since a coproduct of retracts is a retract of coproducts, each object
  of $\xcal$ is a retract of finite coproducts of copies of $F(S_1)$.
  Hence $\Alg(L)$ is a variety.
\end{proof}

\noindent In some simple but important varieties like sets or vector
spaces, every finitely presentable algebra is projective. % As a
 
\begin{prop} 
  Let $\acal$ be a variety such that every finitely presentable
  algebra is projective. Then any functor $L:\acal\to\acal$ preserving
  filtered colimits preserves sifted colimits.
\end{prop}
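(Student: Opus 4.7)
The plan is to exploit the hypothesis to identify the finitely presentable and strongly finitely presentable objects of $\acal$, after which sifted-colimit preservation will follow from filtered-colimit preservation by a density argument. The first step is to note that, as recalled immediately before the proposition, in any variety the strongly finitely presentable algebras are precisely the finitely presentable projective ones; combined with the hypothesis that every finitely presentable algebra is projective, this forces $\acal_{\sfp}=\acal_{\fp}$.

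Next I would invoke the defining property of a strongly locally finitely presentable category: $\acal$ is the free sifted-colimit completion of $\acal_{\sfp}$. Applied to the restriction $L|_{\acal_{\sfp}}\colon\acal_{\sfp}\to\acal$ this yields an essentially unique sifted-colimit preserving functor $L'\colon\acal\to\acal$. In particular $L'$ preserves filtered colimits, as does $L$ by hypothesis, and $L'$ agrees with $L$ on $\acal_{\fp}=\acal_{\sfp}$ by construction.

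It remains to show $L\cong L'$. Since $\acal$ is locally finitely presentable, every object $X$ is canonically the filtered colimit of the forgetful diagram $\acal_{\fp}/X\to\acal$, so any filtered-colimit preserving endofunctor of $\acal$ is, up to natural isomorphism, the pointwise left Kan extension of its restriction to $\acal_{\fp}$. Applying this characterisation to both $L$ and $L'$ yields $L\cong L'$, and since $L'$ preserves sifted colimits so does $L$. I expect the only slightly delicate step to be this final uniqueness claim, but it is a standard consequence of the equivalence $\acal\simeq\Ind(\acal_{\fp})$ and so presents no real obstacle beyond assembling the ingredients already recalled in the excerpt.
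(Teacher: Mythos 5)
Your proposal is correct and follows essentially the same route as the paper's own proof: identify $\acal_{\fp}=\acal_{\sfp}$ from the projectivity hypothesis, form the sifted-colimit preserving extension $L'$ of the common restriction, and conclude $L\cong L'$ from the uniqueness of filtered-colimit preserving extensions off $\acal_{\fp}$. You merely spell out the final uniqueness step (via $\acal\simeq\Ind(\acal_{\fp})$) that the paper leaves implicit in the phrase ``Since $L=L'$''.
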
 
 
\begin{proof}
  Let $L:\mathcal A\to\mathcal A$ preserve filtered colimits.  Then
  $L$ is uniquely determined by its restriction $L_0$ to $\mathcal
  A_{\fp}$. Since $\mathcal A_{\fp}=\mathcal
  A_{\sfp}$, there is a unique extension $L':\mathcal
  A\to\mathcal A$ of $L_0$ preserving sifted colimits. Since $L=L'$,
  $L$ preserves sifted colimits.
\end{proof} 

\noindent The previous proposition can be extended to Boolean
algebras. In fact, the trivial Boolean algebra $\onebb$ is the only
finitely presentable Boolean algebra that is not projective. $\onebb$
is the reflexive coequalizer
\begin{equation}\label{equ:refl-coequ} 
\xymatrix{F1\ar@<.6ex>[r]^{i}  
             \ar@<-.6ex>[r]_{o}  
  & F0 \ar[r]^{s}  
  &  \onebb  
} 
\end{equation} 
where $F$ is the left adjoint to the forgetful functor $\BA\to\Set$, 
$i$ maps the generator to the top, and $o$ maps the generator to the 
bottom.  If $L:\BA\to\BA$ preserves filtered colimits and the above 
coequalizer, then $L$ preserves sifted colimits. 
 
\begin{prop}\label{prop:sifted=filtered-BA} 
  For any filtered colimit preserving functor $L:\BA\to\BA$ there is a
  sifted colimit preserving functor $L':\BA\to\BA$ such that $L$ and
  $L'$ are isomorphic when restricted to the full subcategory of $\BA$
  without $\onebb$. Moreover, $\Alg(L)=\Alg(L')$.
\end{prop}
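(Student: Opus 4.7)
The plan is to set $L'$ to be the (essentially unique) sifted-colimit-preserving extension of the restriction of $L$ to $\BA_{\sfp}$. Since the discussion preceding the proposition identifies $\onebb$ as the only finitely presentable Boolean algebra that is not projective, $\BA_{\sfp}$ consists of all finitely presentable Boolean algebras except $\onebb$. Because $\BA$ is the free sifted-colimit completion of $\BA_{\sfp}$, the restricted functor $L|_{\BA_{\sfp}}:\BA_{\sfp}\to\BA$ extends, uniquely up to natural isomorphism, to a sifted-colimit-preserving functor $L':\BA\to\BA$ that coincides with $L$ on $\BA_{\sfp}$.

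Next I would show $L\iso L'$ on the full subcategory of Boolean algebras $A\neq\onebb$. Such an $A$ is a filtered colimit of finitely presentable Boolean algebras $A_i$, and none of these $A_i$ can equal $\onebb$; otherwise there would be a homomorphism $\onebb\to A$, forcing $A=\onebb$ since no nontrivial Boolean algebra receives a homomorphism from the trivial one. Hence every $A_i$ lies in $\BA_{\sfp}$, where $L$ and $L'$ coincide. As filtered colimits are sifted, both $L$ and $L'$ preserve the presenting colimit, yielding a natural isomorphism $LA\iso L'A$ for all $A\neq\onebb$.

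For the identity $\Alg(L)=\Alg(L')$ the natural isomorphism $L\iso L'$ on carriers distinct from $\onebb$ provides a bijection between $L$- and $L'$-algebra structures that respects homomorphisms. The remaining object $\onebb$ is terminal in $\BA$, so it carries a unique $L$- and a unique $L'$-algebra structure, any Boolean homomorphism into $\onebb$ is automatically a morphism of algebras, and no nontrivial algebra receives a morphism from $\onebb$; the exceptional object therefore contributes the same isolated piece to both algebra categories. The delicate point in the whole argument lies in the second paragraph: one has to be sure that $\onebb$ cannot infiltrate a filtered diagram representing a nontrivial Boolean algebra, a fact which ultimately rests on $\onebb$ being terminal in $\BA$, together with the observation that the universal property of the sifted-colimit completion gives no extra freedom beyond this single object.
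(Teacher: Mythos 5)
Your proof is correct and follows the same route as the paper: restrict $L$ to $\BA_{\sfp}$, take the sifted-colimit-preserving extension, and use that no nontrivial Boolean algebra receives a homomorphism from $\onebb$ to see that the two functors agree on every filtered presentation of an $A\neq\onebb$ (the paper dismisses the $\Alg(L)=\Alg(L')$ part as evident, where you spell it out). One small quibble: your closing remark attributing the key fact to $\onebb$ being terminal is imprecise---terminality alone does not preclude maps out of the terminal object (compare $1$ in $\Set$); what matters is that a homomorphism out of $\onebb$ would have to identify $\top$ and $\bot$ in its codomain, which is exactly the fact you correctly invoked earlier in the argument.
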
 

\begin{proof} 
  If $A\neq\onebb$ then there is no arrow $\onebb\to A$. Thus $A$ is a
  filtered colimit of objects from $\BA_{\sfp}$. Let $L_0$ be the
  restriction of $L$ to $\BA_{\sfp}$ and $L':\BA\to\BA$ be the sifted
  colimit preserving extension of $L_0$. Then $L'$ is isomorphic to
  $L$ on the full subcategory of $\BA$ without $\onebb$. The rest is
  evident.
\end{proof}

\noindent The proposition shows that as far as we are concerned with
algebras over $\BA$, we can assume any finitary functor to preserve
sifted colimits. It also gives a category theoretic reason for
sometimes restricting attention to non-trivial Boolean algebras.

\section{Presenting Functors on Varieties}\label{sec:present-fun}

Given a strongly locally finitely presentable category $\acal$, this
section shows that a functor $L:\acal\to\acal$ has a finitary
presentation by operations and equations iff $L$ preserves sifted
colimits. We start by investigating the category $\scal(\acal)$ of
all functors $\acal\to\acal$ preserving sifted colimits. Morphisms are
natural transformations.  It is a legitimate category because it is
equivalent to $\acal^{\acal_{\sfp}}$.

\begin{prop}\label{prop4.1}
  Let $\acal$ be a strongly locally finitely presentable
  category. Then $\scal(\acal)$ is strongly locally finitely
  presentable.
\end{prop}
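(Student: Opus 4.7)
The plan is to reduce the claim to a statement about functor categories valued in $\acal$. Since $\acal$ is strongly locally finitely presentable, the full subcategory $\acal_{\sfp}$ is, up to isomorphism, small, and $\acal$ is the free sifted-colimit completion of $\acal_{\sfp}$. Restriction along the inclusion $\acal_{\sfp}\hookrightarrow\acal$ therefore induces an equivalence $\scal(\acal)\simeq\acal^{\acal_{\sfp}}$, as already recalled in the paragraph preceding the statement. Hence it suffices to show that the functor category $\acal^{\acal_{\sfp}}$ is strongly locally finitely presentable.

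For this I would apply the characterisation already used in the proof of Theorem~\ref{thm:sfp}: by \cite{CRV}, a cocomplete category is strongly locally finitely presentable iff it admits a small strong generator consisting of strongly finitely presentable objects. Cocompleteness of $\acal^{\acal_{\sfp}}$ is immediate, since colimits in a functor category with cocomplete codomain are computed pointwise. As generators I propose the ``tensors'' $G_{c,a}=\acal_{\sfp}(c,-)\cdot a$, indexed by pairs $c,a\in\acal_{\sfp}$, where $G_{c,a}(d)$ is the $\acal_{\sfp}(c,d)$-fold coproduct of $a$ in $\acal$; these are well defined since $\acal$ is cocomplete, and there are only a set of them up to isomorphism. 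The copower/hom adjunction yields the key natural isomorphism
\[\hom_{\acal^{\acal_{\sfp}}}(G_{c,a},F)\;\cong\;\hom_\acal(a,Fc),\]
which identifies $\hom(G_{c,a},-)$ with the composite of the evaluation functor $\mathrm{ev}_c:\acal^{\acal_{\sfp}}\to\acal$ (which preserves all colimits, being pointwise) and $\hom_\acal(a,-):\acal\to\Set$ (which preserves sifted colimits since $a\in\acal_{\sfp}$). The composite preserves sifted colimits, so each $G_{c,a}$ is strongly finitely presentable.

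It remains to see that the family $\{G_{c,a}\}_{c,a\in\acal_{\sfp}}$ is a strong generator. A natural transformation $\alpha:F\to G$ inverted by all $\hom(G_{c,a},-)$ is pointwise inverted by every $\hom_\acal(a,-)$ with $a\in\acal_{\sfp}$; since $\acal_{\sfp}$ is a strong generator of $\acal$, each component $\alpha_c$ must be an isomorphism, and hence so is $\alpha$. The main obstacle I expect is precisely the construction of this generator: one has to pinpoint objects of $\acal^{\acal_{\sfp}}$ which are simultaneously small enough to form a set, sifted-colimit-presentable (not merely finitely presentable), and jointly strongly generating. Once the double-indexed family $G_{c,a}$ is identified and the copower adjunction is invoked, the rest is a formal application of the characterisation from \cite{CRV}.
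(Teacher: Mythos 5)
Your proof is correct, but it takes a genuinely different route from the paper's. Both arguments start from the same reduction $\scal(\acal)\simeq\acal^{\acal_{\sfp}}$. The paper then presents $\acal$ as the category of finite-product-preserving functors $\tcal\to\Set$ for a small category $\tcal$ with finite products, identifies $\acal^{\acal_{\sfp}}$ with the functors $\tcal\times\acal_{\sfp}\to\Set$ preserving finite products in the first variable, and concludes by quoting the result that categories of models of finite product sketches are strongly locally finitely presentable. You instead verify the \cite{CRV} criterion directly in the functor category, exhibiting the copowers $G_{c,a}=\acal_{\sfp}(c,-)\cdot a$ as a small strong generator consisting of strongly finitely presentable objects; the isomorphism $\hom(G_{c,a},F)\cong\hom_{\acal}(a,Fc)$ is the standard copower/Yoneda computation and the factorisation through the colimit-preserving evaluation functor does yield preservation of sifted colimits, so that part is solid. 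In the strong-generator step you implicitly use that, in a complete well-powered category such as $\acal^{\acal_{\sfp}}$, a small set of objects is a strong generator iff the associated hom-functors are jointly conservative (joint faithfulness then follows via equalizers), and that the functors $\hom_{\acal}(a,-)$, $a\in\acal_{\sfp}$, jointly reflect isomorphisms because $\acal_{\sfp}$ is dense in $\acal$; both facts hold here, but they deserve a sentence each. As for what each approach buys: the paper's argument is shorter and produces an explicit finite-product-sketch (many-sorted algebraic theory) presentation of $\scal(\acal)$, which matches the later use of $\scal(\Set)\simeq\Set^{\Fin}$ as an $\Nbb$-sorted variety in Section~\ref{sec:present-fun}; yours is more self-contained, relying only on the \cite{CRV} characterisation already invoked in Theorem~\ref{thm:sfp}, and it makes explicit a family of strongly finitely presentable generators of $\scal(\acal)$ that the sketch-theoretic proof leaves implicit.
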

\begin{proof}
  There is a small category $\tcal$ with finite products such that
  $\acal$ is equivalent to the category of all functors $\tcal\to\Set$
  preserving finite products. Since
$$
(\Set^\tcal)^{\acal_{\sfp}}\cong\Set^{\tcal\times\acal_{\sfp}},
$$
the category $\acal^{\acal_{\sfp}}$ is equivalent to the category of all functors
$$
\tcal\times\acal_{\sfp}\to\Set
$$
sending cones
$((p_i,id_S):(T_1,S)\times\dots\times(T_n,S)\to(T_i,S))_{i=1}^n$ to
products; here $(p_i:T_1\times\dots\times T_n\to T_i)_{i=1}^n$ is a
finite product in $\tcal$. Hence $\acal^{\acal_{\sfp}}$ is equivalent
to the category of models of a finite product sketch and thus it is
strongly locally finitely presentable (see \cite{ar}, 3.17). Hence
$\scal(\acal)$ is strongly locally finitely presentable.
\end{proof}

A functor $H:\acal\to\bcal$ between strongly locally finitely
presentable categories is called \emph{algebraically exact} provided
that it preserves limits and sifted colimits. Then $H$ has a left
adjoint and the reason for this terminology is that such functors
dually correspond to morphisms of many-sorted algebraic theories (see
\cite{ALR}). In more detail, given a finite product preserving functor
$M:\tcal_1\to\tcal_2$ between categories with finite products, then
the corresponding algebraically exact functor $H$ sends a
$\tcal_2$-algebra $A$ to the composition $AM$.

Let $\acal$ be a variety, $U:\acal\to\Set$ the forgetful functor and
$F$ its left adjoint. We get functors
$$
\Psi:\scal(\acal)\to\scal(\Set)
$$
and
$$
\Phi:\scal(\Set)\to\scal(\acal)
$$
by means of $\Psi(L)=ULF$ and $\Phi(G)=FGU$. The definition is correct
because $U$ preserves sifted colimits and $F$ preserves all colimits.

\begin{prop}\label{prop4.2}
  Let $\acal$ be a variety. Then the functor $\Psi$ is algebraically
  exact and $\Phi$ is its left adjoint.
\end{prop}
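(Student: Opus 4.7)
The plan is to prove the adjunction $\Phi\dashv\Psi$ directly by a two-step transpose, deduce preservation of limits as a formal consequence of being a right adjoint, and verify preservation of sifted colimits by a pointwise calculation.

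For the adjunction I would exhibit, naturally in $G\in\scal(\Set)$ and $L\in\scal(\acal)$, the bijection
\[
\Nat(FGU,L)\;\iso\;\Nat(GU,UL)\;\iso\;\Nat(G,ULF).
\]
The first step applies the pointwise adjunction $F\dashv U$ at each $A\in\acal$ and transposes families $(FGUA\to LA)_A$ to $(GUA\to ULA)_A$. The second step is the precomposition adjunction $({-}\circ U)\dashv({-}\circ F)$ between $\Set^{\Set}$ and $\Set^{\acal}$ induced by $F\dashv U$: given $\alpha\colon GU\to H$, its transpose is $\bar\alpha_X=\alpha_{FX}\circ G\eta_X\colon GX\to HFX$; given $\beta\colon G\to HF$, its inverse transpose is $\bar\beta_A=H\epsilon_A\circ\beta_{UA}\colon GUA\to HA$. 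A routine check using the two triangle identities of $F\dashv U$ shows these constructions are mutually inverse and natural in both variables. Composing the two bijections yields $\Nat(\Phi G,L)\iso\Nat(G,\Psi L)$, so $\Phi\dashv\Psi$.

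Preservation of limits by $\Psi$ is then immediate since right adjoints preserve limits. For sifted colimits I would argue pointwise. Both $\scal(\acal)$ and $\scal(\Set)$ are equivalent, via restriction to strongly finitely presentable objects, to the functor categories $\acal^{\acal_{\sfp}}$ and $\Set^{\Set_{\fp}}$ respectively (cf.\ Proposition~\ref{prop4.1}), in which sifted colimits are pointwise; an interchange-of-colimits argument propagates pointwiseness at $\acal_{\sfp}$ to every $A\in\acal$ via the sifted-colimit extension. Using that $U$ preserves sifted colimits (as already observed preceding the proposition), I obtain
\[
\Psi(\colim_j L_j)(X)=U\bigl((\colim_j L_j)(FX)\bigr)=U(\colim_j L_j(FX))=\colim_j UL_j(FX)=\colim_j\Psi(L_j)(X),
\]
and sifted colimits in $\scal(\Set)$ being pointwise as well, this gives $\Psi(\colim_j L_j)\iso\colim_j\Psi(L_j)$.

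The main obstacle will be the careful verification of the inverse property in the second transpose: the two triangle identities of $F\dashv U$ must be invoked in just the right order, and after that the naturality in $G$ and $L$ is formal. A secondary subtlety worth spelling out is that sifted colimits in $\scal(\acal)$ are computed pointwise at every $A$, not just on $\acal_{\sfp}$; this follows because the sifted-colimit extension is itself a colimit and colimits commute with colimits.
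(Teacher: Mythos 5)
Your proof is correct, but it takes a genuinely different route from the paper's. The paper first invokes the equivalence $\scal(\acal)\simeq\acal^{\acal_{\sfp}}$ of Proposition~\ref{prop4.1} to factor $\Psi$ as post-composition with $U$ followed by pre-composition with $F_{\sfp}$; algebraic exactness is then read off from this factorisation, the left adjoint exists abstractly as a composite of a pointwise left adjoint and a left Kan extension $\Phi_1$, and it is identified with $\Phi$ by checking agreement on the representables $\hom(k,-)$ (which suffices by density and cocontinuity, using $\Phi_1(\hom(k,-))=\hom(Fk,-)\cong\hom(k,-)U$). You instead establish $\Phi\adj\Psi$ head-on via the composite bijection $\Nat(FGU,L)\iso\Nat(GU,UL)\iso\Nat(G,ULF)$, combining the pointwise post-composition adjunction with the pre-composition adjunction induced by $F\adj U$, and then handle sifted colimits by a pointwise computation. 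What the paper's route buys is brevity and a conceptual reading of $\Psi$ as restriction along a morphism of algebraic theories (which is exactly what ``algebraically exact'' is meant to capture); what your route buys is an explicit description of the transposes --- indeed the composite transpose you write down, $\bar\beta_A=UL\eps_A\circ\beta_{UA}$ followed by the $F\adj U$ transpose, is precisely the formula recorded in Remark~\ref{rmk:S(A)algebraic} and used later in the proof of Theorem~\ref{thm:present-sift}, so your argument makes that remark a corollary rather than a separate verification. Two points you rightly flag as needing care are genuine but unproblematic: the mutual inverseness of the second transpose is the standard triangle-identity computation, and the pointwiseness of sifted colimits in $\scal(\acal)$ at an arbitrary $A$ (not just $A\in\acal_{\sfp}$) does follow from the interchange of colimits together with the fact that each $L_j$ and the colimiting object both preserve the sifted colimit presenting $A$ from $\acal_{\sfp}$.
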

\begin{proof}
  The functor $\Psi$ is equivalent to the composition
$$
\acal^{\acal_{\sfp}}\xrightarrow{\ U^{\acal_{\sfp}}\ }
\Set^{\acal_{\sfp}}\xrightarrow{\ \Set^{F_{\sfp}}\ }
\Set^{\Set_{\sfp}}
$$
where $F_{\sfp}$ denotes the restriction of $F$ to
strongly finitely presentable objects. Clearly, $\Psi$ preserves
limits and sifted colimits. It
remains to show that $\Phi$ is left adjoint to $\Psi$. This left
adjoint is equivalent to the composition
$$
\Set^{\Set_{\sfp}}\xrightarrow{\ \Phi_1\ } \Set^{\acal_{\sfp}}\xrightarrow{\
F^{\acal_{\sfp}}\ } \acal^{\acal_{\sfp}}
$$
where $\Phi_1$ is left adjoint to $\Set^{F_{\sfp}}$. Since
$\Set_{\sfp}$ is the category $\Fin$ of finite sets and each functor
$\Fin\to\Set$ is a colimit of hom-functors $\hom(k,-)$ where $k$ is a
finite cardinal, it suffices to show that the left adjoint to $\Psi$
coincides with $\Phi$ on hom-functors $\hom(k,-)$. But it follows from
$$
\Phi_1(\hom(k,-))=\hom(Fk,-)\cong\hom(k,U-)=\hom(k,-)U.
$$
\end{proof}

\begin{rem}\label{rmk:S(A)algebraic}
  The adjoint transpose of $\tau:G\to ULF$ is $FGU\stackrel{F\tau
    U}{\too} FULFU\stackrel{\eps
    LFU}{\too}LFU\stackrel{L\eps}{\too}L$, which we can also write as
  $(GU\stackrel{\tau U}{\too} ULFU\stackrel{UL\eps}{\too}UL)^\dagger$
  where $\dagger$ denotes the adjoint transpose wrt $F\adj U$.
\end{rem}
 
As it is well-known, presentations can be obtained as follows.

\begin{prop}\label{prop:pres-exist}
  Let $H:\acal\to\bcal$ be an algebraically exact functor between
  strongly locally finitely presentable categories with a left adjoint
  $F$ such that the counit $\varepsilon$ is a pointwise regular
  epimorphism. Then each object $A$ in $\acal$ has a presentation as a
  coequalizer
\begin{equation}\label{equ:present-alg-prop} 
 \xymatrix{ 
   FR \ar@<-0.6ex>[r]_{r_2^\sharp} \ar@<0.6ex>[r]^{r_1^\sharp} & 
    FB \ar[r]^{e} & A 
 } 
\end{equation} 
where $B$ is an object of $\bcal$, $R$ a subobject of $HFB\times HFB$
and $r_1^\sharp,r_2^\sharp$ are adjoint transposes of the projections
$r_1,r_2:RB\to HFB$. 
\end{prop}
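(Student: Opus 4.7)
The plan is to build the presentation by taking the kernel pair of the counit component $\varepsilon_A$ and transporting it across the adjunction. Since $F \dashv H$ and the counit $\varepsilon$ is pointwise a regular epimorphism, $\varepsilon_A : FHA \to A$ is a regular epi for every $A\in\acal$, hence (in a strongly locally finitely presentable category, which has all limits) it is the coequaliser of its own kernel pair $k_1,k_2 : K \to FHA$. So the natural choice is $B := HA$.

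Next, I would use algebraic exactness of $H$ to move this kernel pair across to $\bcal$. Since $H$ preserves limits, applying $H$ to $(k_1,k_2)$ yields the kernel pair of $H\varepsilon_A$, and in particular the induced map $\langle Hk_1,Hk_2\rangle : HK \to HFB \times HFB$ is a monomorphism. Set $R := HK$ with $r_i := Hk_i : R \to HFB$; this supplies the data $R \subseteq HFB\times HFB$ demanded by the statement.

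The third step is to identify the adjoint transposes $r_i^\sharp : FR \to FB$. By the standard triangle-identity computation, the transpose of $Hk_i : HK \to HFHA$ under $F\dashv H$ is $\varepsilon_{FHA}\circ F(Hk_i)$, and by naturality of $\varepsilon$ this equals $k_i \circ \varepsilon_K$. So we obtain the diagram
\[
\xymatrix{
FR \ar@<.6ex>[r]^{r_1^\sharp} \ar@<-.6ex>[r]_{r_2^\sharp} & FB \ar[r]^{\varepsilon_A} & A
}
\]
with $r_i^\sharp = k_i\circ\varepsilon_K$.

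It remains to check that this is a coequaliser. Here I use that $\varepsilon_K : FHK \to K$ is again a regular epi, hence in particular an epimorphism. Therefore, for any $f : FB \to X$, the equality $f\circ r_1^\sharp = f\circ r_2^\sharp$ is equivalent to $f\circ k_1 = f\circ k_2$, so the universal property of the coequaliser $\varepsilon_A$ of $(k_1,k_2)$ transfers verbatim to the pair $(r_1^\sharp, r_2^\sharp)$. The only step with any subtlety is this last cancellation, and it is immediate from $\varepsilon_K$ being epi; the rest of the argument is a bookkeeping exercise in the adjunction and the preservation of limits by $H$.
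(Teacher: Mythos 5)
Your proposal is correct and follows essentially the same route as the paper: take the regular epi $\varepsilon_A:FHA\to A$, form its kernel pair, apply the limit-preserving $H$ to get the subobject $R$, observe that the transposes are $k_i\circ\varepsilon_K$, and cancel the epi $\varepsilon_K$ to conclude that the coequaliser is unchanged. The paper states this slightly more generally for an arbitrary regular epi $e:FB\to A$ but explicitly offers the counit as the canonical instance, so the two arguments coincide.
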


\begin{proof}
  It suffices to take a regular epimorphism $e:FB\to A$ (such as the
  counit $\eps_A$), its kernel pair $e_1,e_2:C\to FB$ and to put
  $r_i=He_i$, $i=1,2$. The claim then follows because in strongly
  locally finitely presentable categories any regular epi is the
  coequalizer of its kernel pair and because, by assumption, the
  counit $FHC\to C$ is regular epi.
\end{proof}

 We will need the following modification.

\begin{lem}\label{le4.3}\label{lem:pres-exist}
  Let $H_1:\acal\to\bcal$ and $H_2:\bcal\to\ccal$ be algebraically
  exact functors between strongly locally finitely presentable
  categories with left adjoints $F_1$ and $F_2$, respectively, such
  that both counits $\varepsilon_1,\varepsilon_2$ are pointwise
  regular epimorphisms. Then each object $A$ in $\acal$ has a
  presentation as a coequalizer
\begin{equation}\label{equ:present-alg-lemma} 
 \xymatrix{ 
   F_1R \ar@<-0.6ex>[r]_{r_2^\sharp} \ar@<0.6ex>[r]^{r_1^\sharp} & 
     F_1F_2C \ar[r]^{e} & A 
 } 
\end{equation} 
where $C$ is an object of $\ccal$, $R$ a subobject of $H_1F_1F_2C\times
H_1F_1F_2C$ and $r_1^\sharp,r_2^\sharp$ are adjoint transposes of the
projections $r_1,r_2:R\to H_1F_1F_2C$.
\end{lem}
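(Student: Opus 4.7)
The plan is to apply Proposition~\ref{prop:pres-exist} twice in succession, once for each adjunction, in order first to produce a regular epi $F_1F_2C\to A$ via the composition of the two counits, and then to take the kernel pair only with respect to $H_1$ so that the relation $R$ will live in $\bcal$ rather than in $\ccal$.

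First I would set $B=H_1A$ and $C=H_2B=H_2H_1A$. By hypothesis both counits $\varepsilon_{1,A}:F_1B\to A$ and $\varepsilon_{2,B}:F_2C\to B$ are regular epimorphisms, and since $F_1$ is a left adjoint it preserves all colimits and in particular sends regular epis to regular epis, so $F_1\varepsilon_{2,B}:F_1F_2C\to F_1B$ is regular epi. I would then define $e=\varepsilon_{1,A}\circ F_1\varepsilon_{2,B}:F_1F_2C\to A$ and argue that, because $\acal$ is strongly locally finitely presentable, regular epis coincide with strong epis and are closed under composition, so $e$ is regular epi. This is the step I expect to be the main (albeit minor) obstacle, since it relies on regularity properties of $\acal$ rather than a purely formal manipulation.

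Next I would form the kernel pair $k_1,k_2:K\rightrightarrows F_1F_2C$ of $e$, so that $e=\mathrm{coeq}(k_1,k_2)$, and set $R:=H_1K$ together with $r_i:=H_1k_i$. Because $H_1$ is algebraically exact it preserves finite limits; applying it to the jointly monic kernel pair therefore yields a monic map $(r_1,r_2):R\to H_1F_1F_2C\times H_1F_1F_2C$, exhibiting $R$ as the required subobject of the product.

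Finally, by naturality of the counit $\varepsilon_1$, the adjoint transposes come out to $r_i^\sharp=k_i\circ\varepsilon_{1,K}:F_1R\to F_1F_2C$. Since $\varepsilon_{1,K}$ is regular epi and hence epi, the standard fact that precomposition by an epi does not alter a coequalizer yields $\mathrm{coeq}(r_1^\sharp,r_2^\sharp)=\mathrm{coeq}(k_1,k_2)=A$, giving the desired presentation~(\ref{equ:present-alg-lemma}).
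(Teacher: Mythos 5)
Your proposal is correct and follows essentially the same route as the paper: compose the counit regular epimorphisms (using that $F_1$ preserves regular epis and that regular epis compose in a strongly locally finitely presentable category), take the kernel pair of the composite, apply $H_1$ to obtain $R$, and conclude via the fact that precomposing with the epi $\varepsilon_{1,K}$ does not change the coequalizer. The paper states this more tersely by deferring to Proposition~\ref{prop:pres-exist}, but the substance is identical.
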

\begin{proof}
  Consider the composition $e=e_1F_1(e_2)$, where $e_1:F_1B\to A$ and
  $e_2:F_2C\to B$ are regular epis. Note that $F_1$ preserves regular
  epis since it is a left-adjoint and that in a strongly locally
  finitely presentable category regular epis are closed under
  composition since, in many-sorted varieties, regular epis are
  precisely sort-wise surjective homomorphisms. Now we follow
  Prop~\ref{prop:pres-exist} by taking $r_i=H_1p_i$ and $p_1,p_2$ to
  be the kernel pair of $e$.
\end{proof}

Let $\acal$ be a variety and $\Psi:\scal(\acal)\to\scal(\Set)$ the
algebraically exact functor from \ref{prop4.2}.  Since $\scal(\Set)$
is equivalent to $\Set^\Fin$, it is an $\mathbb{N}$-sorted variety where
$\mathbb{N}$ is the set of non-negative integers. Hence there is another
algebraically exact functor $H_2:\scal(\Set)\to\Set^\mathbb{N}$.  Its left
adjoint $F_2$ sends an $\omega$-sorted set $(G_k)_{k<\omega}$ to the
functor $G:\Set\to\Set$ given as
$$
GX=\coprod_{k<\omega} G_k\times X^k.
$$
We are going to show that Lemma~\ref{le4.3} leads to the presentation of a
functor $L:\acal\to\acal$ as in \cite{bons-kurz:fossacs06}.
 
\begin{definition}[\cite{bons-kurz:fossacs06}]
  \label{def:present-fun}
  A finitary presentation by operations and equations of a functor is
  a pair $\langle G,E\rangle$ where $G:\Set\to\Set$, $GX=\coprod_{k<
    \omega} G_k\times X^k$ and $E=(E_V)_{V\in\omega}$, $E_V\subseteq
  (UFGUFV)^2$. The functor $L$ presented by $\langle G,E\rangle$ is
  the multiple coequalizer
\begin{equation}\label{equ:present-fun} 
\xymatrix{ FE_V\ar@<0.6ex>[r]^{\pi_1^\dagger\ \ \ \ }  
             \ar@<-0.6ex>[r]_{\pi_2^\dagger\ \ \ \ }  
  & FGUFV\ar[r]^{\ \ FGUv}  
  &  FG UA \ar[r]^{\ \ \ q_A} & LA } 
\end{equation} 
where $\pi_{V,i}^\dagger$ are the adjoint transposes of the projections
$E_V\to UFGUFV$; $V$ ranges over finite cardinals and $v$ over
morphisms (valuations of variables) $FV\to A$.
\end{definition} 
 
\begin{example}%[modal algebras]
\label{exle:mod-alg} 
A modal algebra, or Boolean algebra with operator (BAO), consists of a
Boolean algebra $A$ and a meet-preserving operation $A\to
A$. Equivalently, a BAO is an algebra for the functor
$L:\BA\to\BA$, where $L A$ is defined by generators $\Box a$, $a\in
A$, and relations $\Box\top=\top$, $\Box(a\et a')= \Box a \et \Box
a'$. That is, in the notation of the definition, $GX=X$,
$E_V=\emptyset$ for $V\not=2$, $E_2=\{\Box\top=\top, \Box(v_0\et
v_1)=\Box v_0 \et \Box v_1\}$.
\end{example} 
 
\noindent In \cite{pattinson:cml-j,kkp:cmcs04,schroeder:fossacs05}
`modal axioms of rank 1' play a prominent role. These are exactly
those which, considered as equations, are of the form $E_V\subseteq
(UFGUFV)^2$.

\begin{definition}[rank 1]\label{def:rank1}
  Let $\acal$ be a variety with equational presentation
  $\langle\Sigma_\acal,E_\acal\rangle$. Consider a collection $\Sigma$
  of additional operation symbols and a set $E$ of equations in
  variables $V$ over the combined signature $\Sigma_\acal+\Sigma$. We
  say that the equations $E$ are of \emph{rank 1} if every variable is
  under the scope of precisely one operation symbol from $\Sigma$, or
  more formally, $E\subseteq (UFGUFV)^2$ where $G:\Set\to\Set$ is the
  endofunctor associated with $\Sigma$ and $F\adj U$ is the adjunction
  associated with $U:\acal\to\Set$.
\end{definition}
 
\begin{rem}\hfill
\begin{enumerate}[(1)]
\item The generators appear as a functor $G$. This expresses that the
  same generators (the $\Box$ in the example above) are used for all
  $LA$ where $A$ ranges over $\BA$.  Similarly, the coequalizer
  (\ref{equ:present-fun}) is expressed using equations in variables
  $V$, that is, the same relations are used for all $LA$. In
  $E_V\subseteq (UFGUFV)^2$ the inner $UF$ allows for the conjunction
  in $\Box(v_0\et v_1)$ whereas the outer $UF$ allows for the
  conjunction in $\Box v_0 \et \Box v_1$. 
\item It is often useful to analyse algebras over algebras using
  monads and distributive laws. But in our situation, as will be clear
  from the following proof, $UFGUFV$ arises not from applying the
  monad $UF$ to $V$ and then to $GUFV$, but from applying to $G$ the
  functor $\Phi=F-U$ followed by $\Psi=U-F$.
\end{enumerate}
\end{rem} 

\begin{thm}\label{thm:present-sift} 
  An endofunctor on a variety has a finitary presentation by
  operations and equations if and only if it preserves sifted
  colimits.
\end{thm}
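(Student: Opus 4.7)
The plan is to apply Lemma~\ref{le4.3} with $H_1=\Psi:\scal(\acal)\to\scal(\Set)$ from Proposition~\ref{prop4.2} and $H_2:\scal(\Set)\to\Set^\Nbb$ from the discussion preceding Definition~\ref{def:present-fun}. Both are algebraically exact functors between strongly locally finitely presentable categories with left adjoints $\Phi$ and $F_2$ respectively. First I would verify that the counits $\eps_1,\eps_2$ are pointwise regular epimorphisms; this is standard since $\Psi$ and $H_2$ arise as restriction along morphisms of many-sorted algebraic theories, and the counit at any object presents it as a quotient of a free object on its underlying sorted set of operations.

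For the ``only if'' direction, given $L\in\scal(\acal)$, Lemma~\ref{le4.3} yields a coequalizer in $\scal(\acal)$
\begin{equation*}
\Phi R\rightrightarrows \Phi F_2 G_\bullet\to L
\end{equation*}
with $G_\bullet=(G_k)_{k<\omega}\in\Set^\Nbb$, $G:=F_2 G_\bullet$ satisfying $GX=\coprod_k G_k\times X^k$, and $R\in\scal(\Set)$ a subfunctor of $(\Psi\Phi G)^2=(UFGUF-)^2$. I set $E_V:=R(V)$ for each finite cardinal $V$, so $E_V\subseteq(UFGUFV)^2$. Since colimits in $\scal(\acal)\simeq\acal^{\acal_\sfp}$ are pointwise and $R\in\scal(\Set)\simeq\Set^\Fin$ is the left Kan extension of its restriction to $\Fin$, evaluating at $A\in\acal$ yields $(\Phi R)(A)=FRUA=\colim_{v:FV\to A}FE_V$. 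This identifies the pointwise coequalizer at $A$ with the multiple coequalizer~(\ref{equ:present-fun}), after checking via Remark~\ref{rmk:S(A)algebraic} that the transposes $r_i^\sharp$ restrict on the component indexed by $(V,v)$ to $FGUv\circ F\pi_{V,i}^\dagger$.

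For the ``if'' direction, given a presentation $\langle G,E\rangle$, I let $R\in\scal(\Set)$ be the sub-sifted-colimit-preserving-functor of $(\Psi\Phi G)^2$ generated by $E_\bullet$, i.e.\ the closure of the $E_V$ under the action of $\Fin$-morphisms inside $(UFGUF-)^2$. The same colimit computation shows that the pointwise coequalizer of the transposed pair $\Phi R\rightrightarrows \Phi G$ agrees with the multiple coequalizer~(\ref{equ:present-fun}). Hence the presented functor $L$ coincides with this coequalizer computed in $\scal(\acal)$, and therefore preserves sifted colimits.

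The main obstacle I anticipate is the diagrammatic bookkeeping matching the transposes $r_i^\sharp$ from Lemma~\ref{le4.3} with the pairs $(\pi_{V,i}^\dagger, FGUv)$ from Definition~\ref{def:present-fun}; this requires unpacking the formula in Remark~\ref{rmk:S(A)algebraic} and tracing the resulting natural maps through the colimit decomposition $(\Phi R)(A)=\colim_v FE_V$. Verifying the pointwise regular-epi hypothesis on $\eps_1,\eps_2$ needed to invoke Lemma~\ref{le4.3} is routine but should be made explicit.
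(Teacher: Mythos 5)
Your proposal is correct and follows essentially the same route as the paper: the forward direction is exactly the paper's application of Lemma~\ref{le4.3} to $H_1=\Psi$ and $H_2$ with $E_V:=RV$, differing only in that you decompose $FRUA$ directly as $\colim_{v}FE_V$ where the paper passes through $FRUFV$ and the split epi $FUP\eps FV$, while your converse makes precise the paper's own summary that a presented functor is a coequalizer in $\scal(\acal)$ of sifted-colimit-preserving functors (the paper instead verifies the universal property of the sifted colimit pointwise). Note only two cosmetic slips: you have swapped the labels ``if'' and ``only if'' relative to the statement of the theorem, and $FGUv\circ F\pi_{V,i}^\dagger$ should read $FGUv\circ \pi_{V,i}^\dagger$, since $\pi_{V,i}^\dagger$ already denotes the transpose $FE_V\to FGUFV$.
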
 
\begin{proof}
  For `if', let $\acal$ be a variety, $U:\acal\to\Set$ the forgetful
  functor, $F$ its left adjoint, and let $L:\acal\to\acal$ preserve
  sifted colimits.
  % Since strongly finitely presentable objects
  % in $\acal$ are precisely retracts of finitely generated free
  % algebras, the category $\scal(\acal)$ is equivalent to
  % $\acal^{F(\Fin)}$. A functor $L:\acal\to\acal$ preserving sifted
  % colimits corresponds to its restriction $L':F(\Fin)\to\acal$.
  We apply Lemma \ref{le4.3} to $H_1=\Psi$ and
  $H_2:\scal(\Set)\to\Set^\omega$ described above. It presents $L$ as a
  coequalizer
\begin{equation}\label{equ:present-alg-1} 
 \xymatrix{ 
   \Phi R \ar@<-0.6ex>[r]_{r_2^\sharp} \ar@<0.6ex>[r]^{r_1^\sharp} & \Phi G \ar[r]^{e} & L
 } 
\end{equation}
where $(G_k)_{k<\omega}$ is an $\omega$-sorted set and
$G=\coprod_{k<\omega} G_k\times (-)^k$.  It yields a finitary
presentation of $L$ with $E_V$ given by $RV$. To verify this in detail
consider
\begin{equation}\label{equ:present-alg-2} 
  \xymatrix{ 
    FRUA \ar@<-0.6ex>[r] \ar@<0.6ex>[r]^{r_i^\sharp A} & 
    FGUA \ar@{..>}^{q_A}[r] & LA \\
    FRUFV \ar@<-0.6ex>[r]\ar[u]^{FRUv} \ar@<0.6ex>[r]^{r_i^\sharp FV } \ar[d]_{FUP\eps FV} & 
    FGUFV \ar@{..>}[u]_{FGUv} &\\
    FRV \ar@{..>}@<-0.6ex>[ur]_{\pi_{V,i}^\dagger} \ar@{..>}@<0.6ex>[ur] &  &      
  }
\end{equation}
where the upper row is Diagram~\ref{equ:present-alg-1}, and the dotted
arrows are Diagram~\ref{equ:present-fun} ($\dagger$ denotes adjoint
transpose wrt $F\adj U$ and $\sharp$ wrt $\Phi\adj \Psi$). To check
that the two triangles commute, recall from Lemma~\ref{lem:pres-exist}
that $R=UPF$ arises from the kernel $p_1,p_2:P\to FGU$ of $FGU\to
L$. We put $r_i=Up_iF$ and $\pi_{V,i}=Up_iFV$. We have $r^\sharp_i FV=
FGU\eps FV\circ \eps FGUFUFV \circ FUpFUFV$
(Remark~\ref{rmk:S(A)algebraic}) and $\pi_{V,i}^\dagger = \eps
FGUFV\circ FUpFV$. Commutativity of the triangles now follows from the
naturality of $FUp$ and $\eps FGU$.
To finish the argument, recall that the upper row is a
coequalizer. Because $A$ is a sifted colimit of finitely generated
free algebras $FV$ and because $FRU$ preserves sifted colimits, it
follows that $q_A$ is the multiple coequalizer obtained from the
middle row.  Since $FUP\eps FV$ is epi ($\eps F$ is split by $F\eta$)
the dotted arrows also form a multiple coequalizer.

Conversely, every functor $L:\acal\to\acal$ with a presentation
preserves sifted colimits because such a functor is a coequalizer of
two natural transformations between functors preserving sifted
colimits.  In more detail, suppose that $L$ has a presentation as in
(\ref{equ:present-fun}).  Let $c_i:A_i\to A$ be a sifted colimit.  We
have to show that $Lc_i$ is a sifted colimit. Given a cocone
$d_i:LA_i\to L'$ we have to show that there is a unique $k$ as
depicted in
\[ 
\xymatrix@C=20pt{  
FE_V\ar@<.6ex>[r]^{\pi_1^\sharp\ \ \ \ }  
             \ar@<-.6ex>[r]_{\pi_2^\sharp\ \ \ \ }  
  & FGUFV\ar[r]^{} \ar[dr]_{\ \ FG Uv^\sharp}  
    &  FGUA_i \ar[r]^{\ \ q_{A_i}} \ar[d]^{FGUc_i} 
      & LA_i \ar[d]^{Lc_i} \ar[dr]^{d_i} 
        & \\ 
  &  
    & FGUA \ar[r]_{\ \ q_A} \ar `d[r] `/4pt[rr]_{h} [rr] 
      & LA \ar@{..>}[r]_{k} 
        & L'  
} 
\] 
$U$ preserves sifted colimits because $\acal$ is a variety, $G$ 
preserves sifted colimits because they commute with finite products, 
and $F$ preserves all colimits.  Therefore we have $h$ with 
$d_i\circ q_{A_i}=h\circ FGUc_i$. Then $k$ is obtained from the joint 
coequalizer $q_A$ once we show that $h\circ FGUv^\sharp\circ 
\pi_1^\sharp = h\circ FGUv^\sharp\circ \pi_2^\sharp$ for all $v:V\to 
UA$. For this consider $v:V\to UA$.  Since $\hom(V,-)$ preserves 
sifted colimits ($V$ is finite) and $U$ preserves sifted colimits, 
there is some $A_j$ and some $w:V\to UA_j$ such that $v=Uc_j\circ v'$. 
It follows $v^\sharp= c_j\circ w^\sharp$, hence 
$FGUv^\sharp=FGUc_j\circ FGU w^\sharp$. 
\end{proof}

\begin{rem}[correspondence between functors and presentations]
\label{rmk:pres-from-functor}
We summarise the constructions of the proof for future reference.
\begin{enumerate}[(1)]
\item Let $L$ be a sifted colimit preserving functor. Then we obtain a
  finitary presentation $\langle\Sigma,E\rangle$ as follows.
  Given $L$ we find a suitable $G$ as $GX=\coprod_{n<\omega}ULFn\times
  X^n$, with $GX\to ULFX$ given by $(\sigma\in ULFn,v:n\to X)\mapsto
  ULF(v)(\sigma)$. The quotient $q:FGUA\to LA$ is then given by (see
  Remark~\ref{rmk:S(A)algebraic}) the adjoint transpose of $GUA\to
  ULA$, mapping $(\sigma,v:n\to UA)\mapsto UL(v^\dagger)(\sigma)$. To
  summarise, the set of $n$-ary operations of $\Sigma$ is $ULFn$ and
  the set of equations in $n$ variables is the kernel of $q:FGUFn\to
  LFn$.
\item Conversely, every presentation $\langle\Sigma,E\rangle$ defines
  a functor as in Definition~\ref{def:present-fun}.
\end{enumerate}
\end{rem}

Given a variety $\acal$ and a functor $L:\acal\to\acal$ preserving
sifted colimits, we know that $\Alg(L)$ is a variety (see
\ref{thm:variety}). The main point is that one obtains a presentation
of $\Alg(L)$ from a presentation of $\acal$ and from a presentation of
$L$.

\begin{thm}[\cite{bons-kurz:fossacs06}]\label{thm:present-fun}
 Let $\acal\iso\Alg(\Sigma_\acal, E_\acal)$ be a variety and $\langle 
  \Sigma_L, E_L\rangle$ a finitary presentation of $L:\acal\to\acal$. Then 
  $\Alg(\Sigma_\acal + \Sigma_L, E_\acal+E_L)$ is isomorphic to 
  $\Alg(L)$, where equations in $E_\acal$ and $E_L$ are understood as 
  equations over $\Sigma_\acal + \Sigma_L$. 
\end{thm}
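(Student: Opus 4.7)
The plan is to exhibit a concrete isomorphism of categories by showing that an $L$-algebra structure on a fixed $\acal$-object $A$ corresponds bijectively to an interpretation on the underlying carrier $UA$ of the additional operation symbols $\Sigma_L$ that respects the equations $E_L$. Throughout, the $\Sigma_\acal$- and $E_\acal$-part of the combined presentation simply encodes that the carrier lies in $\acal$, since $\acal\cong\Alg(\Sigma_\acal,E_\acal)$.

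First I would unpack morphisms $\alpha:LA\to A$ in $\acal$ via the presentation of $L$ from Definition~\ref{def:present-fun}. By the universal property of the multiple coequalizer (\ref{equ:present-fun}), such an $\alpha$ corresponds bijectively to a morphism $\beta:FGUA\to A$ in $\acal$ (obtained as $\alpha\circ q_A$) subject to the coequalizing conditions. Transposing across $F\adj U$, the datum $\beta$ is a function $b:GUA\to UA$ in $\Set$; since $GUA=\coprod_{n<\omega}G_n\times(UA)^n$, this is precisely a $\Sigma_L$-interpretation $(\sigma^A:(UA)^n\to UA)_{\sigma\in G_n}$ on the carrier of $A$. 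This assigns to every $L$-algebra structure on $A$ a $(\Sigma_\acal+\Sigma_L)$-algebra on the same carrier, and conversely, every $\Sigma_L$-interpretation on $UA$ yields a candidate $\beta$ whose factorization through $q_A$ remains to be justified.

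The second step is to show that $\beta$ factors (necessarily uniquely) through $q_A$ iff every equation in $E_L$ holds in $A$ under every valuation of its variables. An equation in $E_V\subseteq(UFGUFV)^2$ is, by construction, a parallel pair $\pi_1^\dagger,\pi_2^\dagger:FE_V\rightrightarrows FGUFV$, and coequalizing $\beta\circ FGUv\circ \pi_i^\dagger$ for a given valuation $v:FV\to A$ unwinds via the $F\adj U$-adjunction to the assertion that the two sides of the equation, interpreted in $A$ using $v$ and the operations $\sigma^A$, yield the same element. Quantifying over all finite $V$ and all $v:FV\to A$ (the morphisms appearing in the multiple coequalizer) is precisely quantifying over all valuations into $A$, i.e., is the usual universal-algebraic semantics of ``$A$ is a model of $E_L$''.

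Third, I would verify that this bijection is functorial. A morphism $f:(A,\alpha)\to(A',\alpha')$ in $\Alg(L)$ is an $\acal$-arrow with $f\circ\alpha=\alpha'\circ Lf$; precomposing with $q_A$ and using naturality $q_{A'}\circ FGUf=Lf\circ q_A$, this square is equivalent to $f\circ\beta=\beta'\circ FGUf$, which transposes to the statement that $f$ preserves each $\sigma\in\Sigma_L$. Since $f$ is already an $\acal$-morphism, it preserves $\Sigma_\acal$, so $f$ is exactly a homomorphism of $(\Sigma_\acal+\Sigma_L)$-algebras. The main obstacle is the careful diagrammatic bookkeeping in the second step: the presentation (\ref{equ:present-fun}) is a coequalizer indexed over all pairs $(V,v)$, whereas the traditional notion of a model of $E_L$ quantifies over valuations in the metatheory, and one must check that the adjoint transposes $\pi_{V,i}^\dagger$ encode term evaluation in $A$ correctly so that the two formulations coincide. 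Once this is settled, the isomorphism of categories follows immediately from the universal properties already invoked.
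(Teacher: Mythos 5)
Your proposal is correct and follows exactly the correspondence the paper sketches in Remark~\ref{thm:present-fun:rmk} (the paper itself defers the full proof to the cited reference \cite{bons-kurz:fossacs06}): the transpose of $\alpha\circ q_A$ along $F\adj U$ is the $\Sigma_L$-interpretation, factorization through the multiple coequalizer of Definition~\ref{def:present-fun} is satisfaction of $E_L$ under all valuations $v:FV\to A$, and naturality of $q$ together with $q_A$ being epi gives the bijection on morphisms. You supply precisely the diagrammatic bookkeeping the paper omits, so nothing further is needed.
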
 
 
\begin{rem} \label{thm:present-fun:rmk} The theorem shows that, for
  a functor $L$ determined by its action on the finitely generated
  free algebras of a variety $\acal$, the notion of $L$-algebra is a
  special case of the universal algebraic notion of algebra for
  operations and equations.
\begin{enumerate}[(1)]
\item In detail, given an algebra $\alpha:LA\to A$, we define a
  $(\Sigma_\acal+\Sigma_L)$-algebra structure $\mathsf{A}$ on $A$ via
  $\sigma^\mathsf{A}(a_1,\ldots a_n)=\alpha(\sigma(a_1,\ldots a_n))$
  for all $n$-ary operations $\sigma\in\Sigma_L$ and all $(a_1,\ldots
  a_n)\in A^n$. 
\item Conversely, given $\mathsf{A}\in\Alg(\Sigma_\acal + \Sigma_L,
  E_\acal+E_L)$, we can define $\alpha:LA\to A$ by
  $\alpha(\sigma(a_1,\ldots a_n))=\sigma^{\mathsf{A}}(a_1,\ldots a_n)$
  for all $n$-ary operations $\sigma\in\Sigma_L$ and all $(a_1,\ldots
  a_n)\in A^n$. Since $LA$ is freely constructed from generators this
  determines $\alpha$ on all of $LA$.  
\item The logical significance of the theorem is that it ensures that the
  Lindenbaum algebra for the signature $\Sigma_\acal + \Sigma_L$ and
  the equations $E_\acal+E_L$ is the initial $L$-algebra.
\end{enumerate}
\end{rem}
 
\noindent A consequence of Theorem~\ref{thm:present-sift} is the
immediate proof of the fact that functors having a presentation are
closed under composition (see \cite{bons-kurz:fossacs06}). Also, they
preserve surjections as regular epis are reflexive coequalizers of
their kernel pairs and hence sifted colimits.

A previous draft of this paper posted by the first author contained a
wrong statement about the preservation of injections. Although it is
true that a sifted colimit preserving functor on $\BA$ preserves
injections, this does not in general extend to other varieties. For
example, take the category of semi-groups and the functor $L$ given by
the presentation consisting of a unary operator $\Box$ and equations
$\Box(a\circ b)= (\Box a) \circ (\Box b)$ and $\Box(a\circ
b)=\Box(a'\circ b')$. The first equation on its own would just specify
that $LA$ is isomorphic to $A$. But the second equations means that
all elements of $A$ that can be decomposed must be identified. Now
consider the non-negative integers $\Nbb$ with addition as the
semi-group operation.  Then $L$ does not preserve the injectivity of
inclusion $ \Nbb\setminus\{0\}\to\Nbb$ since $L(\Nbb\setminus\{0\})$
has two elements whereas $L\Nbb$ has only one element. More
importantly, Rob Myers \cite{myers:phd} has an example of an
equationally presented functor on distributive lattices that does not
preserve injections.

\section{Introduction to Part II: Duality of Algebras and Coalgebras}
This second part of the paper proves a representation theorem for
functor-algebras based on Stone's representation theorem for Boolean
algebras and extending the J\'onsson-Tarski theorem for modal algebras
(Boolean algebras with operators).

\medskip\noindent\textbf{Stone's representation theorem} for Boolean
algebras shows that any Boolean algebra $A$ can be represented as an
algebra of subsets where the Boolean operations are interpreted
set-theoretically (conjunction as intersection, etc). For the proof,
one identifies a functor $\Sigma:\BA\to\Set\op$ and a $\BA$-morphism
\begin{equation}\label{equ:Stone}
  \iota_A: A\to \Pi\Sigma A
\end{equation}
into the powerset $\Pi(\Sigma A)$ and then shows that $\iota_A:A\to
\Pi\Sigma A$ is injective, exhibiting $A$ as isomorphic to a
subalgebra of a powerset.

\medskip\noindent Analysing this situation from a categorical point of
view one finds that
\begin{equation}
  \xymatrix{ 
    {\BA} \ar@/_/[rr]_{\Sigma}  & &  {\Set\op}\ar@/_/[ll]_{\Pi}  
    \\ 
    \ar[u] \BA_\omega\ar@/_/[rr]  & & \ar[u] \Set_\omega\op\ar@/_/[ll]
  } 
\end{equation}
(i) the category $\BA_\omega$ of finite Boolean algebras is dually
equivalent to the category $\Set_\omega$ of finite sets, (ii) $\BA$
and $\Set$ are the completion under filtered colimits, or
ind-completion, of $\BA_\omega$ and $\Set_\omega$, (iii) the two
functors $\Pi$ and $\Sigma$ appearing in Stone's representation
theorem arise from lifting the duality between $\BA_\omega$ and
$\Set_\omega$ to the completions $\BA$ and $\Set$. In such a
situation, $\Sigma$ is left-adjoint to $\Pi$ and the representation
morphism (\ref{equ:Stone}) is the unit of the adjunction.

\medskip\noindent Abstracting from the particularities of finite
Boolean algebras and finite sets leads us to replace $\BA_\omega$ by
an arbitrary small, finitely complete and co-complete category
$\ccal$ and to identify $\BA$ and $\Set\op$ as the so-called ind- and
pro-completions of $\BA_\omega$:
\begin{equation}\label{equ:IndPro} 
\xymatrix{ 
{\Ind} \ccal%\ar@(dl,ul)[]^{G \ } 
  \ar@/_/[rr]_{\Sigma}  & &  
{\Pro\ccal}\ar@/_/[ll]_{\Pi} %\ar@(dr,ur)[]_{\ H}  
\\ 
& \ar[ul]^{\hat{(-)}} \ccal  \ar[ur]_{\bar{(-)}} & 
} 
\end{equation} 
We summarise what we need to know about the diagram above, details can
be found in Johnstone~\cite[VI.1]{johnstone:stone-spaces}. In the
diagram $\hat{(-)}:\ccal\to\Ind\ccal$ is the completion of $\ccal$
under filtered colimits and $\bar{(-)}:\ccal\to\Pro\ccal$ under
cofiltered limits. Since $\ccal$ is finitely cocomplete, we have that
$\Ind\ccal$ is cocomplete. $\Ind\ccal$ is also complete. Dually
$\Pro\ccal$ is complete and cocomplete. Then $\Sigma$ is defined to be
the unique extension of $\bar{(-)}$ along $\hat{(-)}$ preserving all
colimits, and $\Pi$ is the unique extension of $\hat{(-)}$ along
$\bar{(-)}$. The functors $\Sigma$ and $\Pi$, also known as
Kan-extensions, restrict to isomorphisms on $\ccal$, that is,
\begin{equation}\label{equ:SigmaPi} 
\Sigma \hat C \cong \bar C \quad \quad \Pi\bar C \cong \hat C 
\end{equation} 
 
\begin{example}\label{exle:IndPro}
  Let $\acal$ be a variety and $\acal_\fp$ be the full subcategory of
  finitely presentable algebras. $\acal_\fp$ is closed under finite
  colimits. If $\acal$ is a \emph{locally finite variety}, that is, if
  finitely generated free algebras are finite, then $\acal_\fp$ is
  also closed under finite limits and $\acal_\fp$ is an example of a
  finitely complete and finitely cocomplete category $\ccal$. In
  particular we have the following instances of our general situation.
\begin{enumerate}[(1)]
\item $\ccal=\BA_\omega$ (finite Boolean algebras = finitely
  presentable Boolean algebras), $\Ind\ccal=\BA$, $\Pro\ccal=\Set\op$.
  $\Sigma A$ is the set of ultrafilters over $A$ and $\Pi$ is
  (contravariant) powerset.
\item $\ccal=\DL_\omega$ (finite distributive lattices = finitely
  presentable distributive lattices), $\Ind\ccal=\DL$,
  $\Pro\ccal=\Poset\op$. $\Sigma A$ is the set of prime filters over
  $A$ and $\Pi$ gives the set of downsets.
\end{enumerate}
\end{example} 
 
\noindent The following is a well-known fact.
 
\begin{prop}\label{prop:IndProAdj} 
$\Sigma$ is left adjoint to $\Pi$.  
\end{prop}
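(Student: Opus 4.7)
The plan is to produce a natural bijection $\Pro\ccal(\Sigma A,X)\iso\Ind\ccal(A,\Pi X)$ by reducing both sides, via the universal properties of the ind- and pro-completions, to the common double limit $\lim_i\lim_j\ccal(C_i,D_j)$ of hom-sets in $\ccal$. Concretely I would pick presentations $A=\colim_i\hat{C_i}$ as a filtered colimit in $\Ind\ccal$ and $X=\lim_j\bar{D_j}$ as a cofiltered limit in $\Pro\ccal$, and show that each side of the desired bijection is naturally isomorphic to that double limit.

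Before starting the chain I would record the preservation property dual to the one defining $\Sigma$: the functor $\Pi$ preserves cofiltered limits. This follows because $\bar{(-)}:\ccal\to\Pro\ccal$ is the free completion under cofiltered limits and $\Ind\ccal$ is complete, so any functor from $\ccal$ into $\Ind\ccal$ extends uniquely along $\bar{(-)}$ to a cofiltered-limit-preserving functor; the $\Pi$ defined in the excerpt as this unique extension thus preserves cofiltered limits. Alternatively one can transfer the statement from the colimit preservation of $\Sigma$ via the equivalence $\Pro\ccal\simeq(\Ind(\ccal\op))\op$.

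For the left-hand side I would use that $\Sigma$ preserves colimits together with $\Sigma\hat{C}\iso\bar{C}$ to rewrite $\Sigma A$ as $\colim_i\bar{C_i}$, then flip this colimit out of the first slot of $\Pro\ccal(-,X)$ to obtain $\lim_i\Pro\ccal(\bar{C_i},X)$. Plugging in $X=\lim_j\bar{D_j}$ and invoking that objects of $\ccal$ become ``finitely copresentable'' in $\Pro\ccal$---equivalently, $\Pro\ccal(\bar{C_i},-)$ preserves cofiltered limits, dual to finite presentability in $\Ind\ccal$, cf.\ \cite[VI.1]{johnstone:stone-spaces}---this becomes $\lim_i\lim_j\Pro\ccal(\bar{C_i},\bar{D_j})$, which full faithfulness of $\bar{(-)}$ collapses to $\lim_i\lim_j\ccal(C_i,D_j)$. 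The right-hand side unfolds symmetrically: from $\Ind\ccal(\colim_i\hat{C_i},\Pi\lim_j\bar{D_j})$, flip out the colimit, push the limit through $\Pi$, use $\Pi\bar{D_j}\iso\hat{D_j}$ and full faithfulness of $\hat{(-)}$, ending at the same double limit.

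All isomorphisms are natural in $A$ and $X$ by construction, so the chain assembles into the claimed adjunction. The only genuinely non-formal ingredients, and thus the main obstacle, are establishing the preservation property of $\Pi$ together with the dual ``finite copresentability'' of objects $\bar{C}$ in $\Pro\ccal$; once these are in hand every other step is naturality of hom-functors or one of the defining isomorphisms $\Sigma\hat{C}\iso\bar{C}$ and $\Pi\bar{C}\iso\hat{C}$.
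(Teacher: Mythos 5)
Your proof is correct, but it takes a different route from the paper's. The paper does not compute hom-sets at all: it observes that $\Sigma$, being a (colimit-preserving) left Kan extension, automatically has a right adjoint given by the nerve-type formula $RY=\Pro\ccal(\bar{(-)},Y)$, and then identifies $R$ with $\Pi$ by noting that both preserve cofiltered limits ($R$ because it is a right adjoint, $\Pi$ by construction) and agree on $\ccal$ via the Yoneda lemma; uniqueness of cofiltered-limit-preserving extensions along $\bar{(-)}$ finishes the argument. Your double-limit computation is essentially an element-level unwinding of that same formula --- indeed $\Ind\ccal(A,RY)\cong\lim_i\Pro\ccal(\bar C_i,Y)$ is exactly the left-hand side of your chain --- so what the paper's argument buys is brevity and the avoidance of any explicit naturality bookkeeping (which you assert rather than verify; with canonical presentations it does work, but it is the one place where your version requires care). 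What your version buys is self-containedness: it never invokes the general adjoint-to-a-Kan-extension fact. One small correction: the step where you invoke ``finite copresentability'' of the objects $\bar C_i$ is misattributed. The isomorphism $\Pro\ccal(\bar C_i,\lim_j\bar D_j)\cong\lim_j\Pro\ccal(\bar C_i,\bar D_j)$ is just continuity of the covariant hom-functor and holds for any object in the first slot; finite copresentability would be the statement that $\Pro\ccal(-,\bar C_i)$ turns cofiltered limits into filtered colimits, which is a genuinely non-trivial property of $\Pro\ccal$ but one your chain never uses. So that ``main obstacle'' you flag is in fact not an obstacle at all, and the only substantive inputs are the ones the paper also uses: $\Sigma\hat C\cong\bar C$, $\Pi\bar C\cong\hat C$, and the preservation properties built into the definitions of $\Sigma$ and $\Pi$.
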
 

\begin{proof}
  Any left Kan extension such as $\Sigma$ has a right adjoint $R$
  given by $RY=\Pro\ccal(\bar{(-)},Y)$. As $\Pi$ is the unique
  extension of $\hat{(-)}$ preserving cofiltered limits, the
  proposition follows from $\Pi$ and $R$ agreeing on $\ccal$, which,
  in turn, is a consequence of the Yoneda lemma. \end{proof}

\medskip\noindent\textbf{The J\'onsson-Tarski theorem}
\cite{jons-tars:bao1} extends Stone's theorem to modal algebras, or,
Boolean algebras with operators. For example, the BAO of
Example~\ref{exle:mod-alg} is a Boolean algebra with one unary
operation $\Box$ to interpret the modal operator. Given a BAO $A$, we
can associate with it a dual Kripke frame $(X,R_\Box)$, where $X$ is
$\Sigma A$ as above (Example~\ref{exle:IndPro}). Describing $\Sigma A$
as the set of ultrafilters of $A$, the relation $R_\Box$ is given explicitely by
\begin{equation}\label{equ:R-Box}
  xR_\Box y\
  \Leftrightarrow \ \forall a\in A\;.\;\Box a \in x \Rightarrow y\in
  a,
\end{equation}
  that is, $R_\Box$ is the largest relation such that, in logical
  notation, $(xR_\Box y \;\&\; x\Vdash\Box a)\ \Rightarrow \ y\Vdash
  a$. Conversely, to any Kripke frame $(X,R)$, one can associate the
  so-called complex algebra $(\Pi X, \Box_R)$ where $\Pi$ is powerset
  and $\Box_R\; a=\{x\in X\mid \forall y\ .\ xRy \Rightarrow y\in
  a\}$. The J\'onsson-Tarski theorem then states that the representation
  map (\ref{equ:Stone}) is not only a Boolean algebra homomorphism but
  also BAO-morphism
\begin{equation}\label{equ:jons-tars}
(A,\Box)\too (\Pi\Sigma A, \Box_{R_\Box}).
\end{equation} 
In our category theoretic reconstruction, the additional operator
$\Box$ corresponds to a functor $H$ and a BAO to an algebra $HA\to
A$. The relational structure corresponds to an algebra $KX\to X$ for a
functor $K$ on $\Pro\ccal$. (For the purpose of this part, it is
notationally easier to work with algebras on $\Pro\ccal$ rather than with coalgebras on
$(\Pro\ccal)\op$.)

\medskip\noindent It is interesting to note that the co-unit
$\Sigma\Pi X\to X$ of the adjunction does not lift to a morphism
between Kripke frames (written now in
$(\Pro\ccal)\op$) $$(X,R)\too(\Sigma\Pi X, R_{\Box_R})$$ as it is only
a graph homomorphism lacking the backward condition of a Kripke frame
morphism. This makes the J\'onsson-Tarski representation theorem
particularly interesting. In our category theoretic reconstruction, it
means that (i) $\Sigma$ does not lift to a \emph{functor}
$\Alg(H)\to\Alg(K)$ in (\ref{equ:tildeSigma}) and (ii)
$h:K\Sigma\to\Sigma H$ is not required to be natural.

\medskip\noindent In (\ref{equ:jons-tars}), $(\Pi\Sigma A,
\Box_{R_\Box})$ is known as the \textbf{canonical extension} of
$(A,\Box)$. The theory of canonical extensions, also going back to
J\'onsson-Tarski \cite{jons-tars:bao1} (but see eg \cite{venema:ac}
for a more recent overview), studies the following question: Suppose
that the BAO $A$ satisfies some equation $e$, does its canonical
extension $\Pi\Sigma A$ then satisfy $e$. Investigations of this kind
are beyond the scope of the paper.  Our generalisation of the
J\'onsson-Tarski theorem only concerns algebras for a
\emph{functor}. In terms of additional equations $e$, this means that
our result only shows that equations of rank 1
(Definition~\ref{def:rank1}) are preserved under canonical
extensions. Of course, in the case of Kripke frames this (and much
more) is already known, but the point of this paper is to generalise
to other functors $T$ (or $K$, as they are called in Part II).

\section{Representing Algebras on Ind-Completions}\label{sec:IndPro} 
\noindent 
We want to present algebras over $\Ind\ccal$ by coalgebras over
$\Pro\ccal\op$, or equivalently, by algebras over
$\Pro\ccal$. Therefore, with $\ccal, \Sigma, \Pi$ as in
Diagram~\ref{equ:IndPro}, we consider now $H:\Ind\ccal\to\Ind\ccal$
and $K:\Pro\ccal\to\Pro\ccal$
\begin{equation}\label{equ:IndProAlg} 
\xymatrix{ 
{\Ind\ccal} \ar@(dl,ul)[]^{H \ } 
  \ar@/_/[rr]_{\Sigma}  & &  
{\Pro\ccal}\ar@/_/[ll]_{\Pi} \ar@(dr,ur)[]_{\ K}  
} 
\end{equation} 
We write $\iota:\Id\to\Pi\Sigma$ and $\eps:\Sigma\Pi\to\Id$ for the
unit and co-unit of the adjunction and note that $\iota_{\hat C}$ and
$\eps_{\bar C}$ are isomorphisms for $C\in\ccal$. 

\medskip\noindent We say that \emph{$H$ is determined by $K$ on
  $\ccal$} if there is an isomorphism
\begin{align}
  \kappa_C\ \ : \ \ H\hat C \stackrel{\cong}{\longrightarrow} \Pi K \Sigma
  \hat C\label{equ:HK-H}
\end{align} 
natural in $C\in\ccal$; we say that \emph{$K$ restricts to $\ccal$}
if the counit $\eps $ is an iso on $K\bar C$
\begin{equation}\label{equ:HK-restrict} 
  \eps_{K\bar C}\ \ : \ \ \Sigma\Pi K \bar C \stackrel{\cong}{\longrightarrow}  K\bar C 
\end{equation} 
Together, \eqref{equ:HK-H} and \eqref{equ:HK-restrict} give an
isormorphism
\begin{align}
(\eps_{K\Sigma\hat C}\circ\Sigma\kappa_C)^{-1}\ \ :\ \  K \Sigma\hat C \stackrel{\cong}{\longrightarrow} \Sigma H \hat C \label{equ:HK-K}
\end{align} 
Recalling \eqref{equ:SigmaPi}, we remark that \eqref{equ:HK-H} and
\eqref{equ:HK-restrict} can be written more symmetrically saying that
$H$ and $K$ agree on $\ccal$:
\begin{equation}\label{eq:HKagree}
 H\hat C  \cong \Pi K \bar C \quad\quad\quad\quad 
K\bar C \cong \Sigma H \hat C
\end{equation} 
Conversely, these \eqref{eq:HKagree} implies \eqref{equ:HK-H} and
\eqref{equ:HK-restrict} if we require that the compositions $H\hat C
\cong \Pi K \bar C \cong \Pi \Sigma H \hat C$ and $K\bar C \cong
\Sigma H \hat C\cong \Sigma\Pi K\bar C$ give the unit $\iota$ and the
counit $\eps$.

\pskip \textbf{The natural transformation $\bm{\delta:H\Pi\to \Pi K}$}
is obtained by extending (\ref{equ:HK-H}) from $\ccal$ to $\Ind\ccal$
as follows.
$\Pi X$ is a filtered colimit $\hat C_i\to \Pi X$.  If $H$ preserves
filtered colimits we therefore obtain $H\Pi\to \Pi K$ as in
\begin{equation}\label{equ:def-delta} 
\xymatrix@C=50pt{ 
 \Pi X &  H\Pi X \ar[r]^{\delta_X} & \Pi K X \\ 
 \hat C_i\ar[u]^{c_i}& H \hat C_i \ar[u]^{H c_i} \ar[r]^{\eqref{equ:HK-H}} & \Pi K \Sigma 
 \hat C_i \ar[u]_{\Pi K c_i^\sharp}  
} 
\end{equation} 
where $c_i^\sharp: \Sigma \hat C_i \to X$ is the transpose of
$c_i:\hat C_i\to \Pi X$. $\delta$ allows us to lift $\Pi$ to a functor
\begin{equation}\label{equ:tildePi} 
\xymatrix{ 
{\Alg(H)} & & {\Alg(K)} \ar@/_/[ll]_{\tilde\Pi}  
} 
\end{equation} 
mapping a $K$-algebra $(B,\beta)$ to the $H$-algebra $(\Pi 
B,\Pi\beta\circ\delta_B)$. 
 
\begin{lem}
  For all $C\in\ccal$ we have
\begin{equation}\label{equ:def-delta-2} 
\xymatrix{ 
H\Pi \Sigma\hat C \ar[rr]^{\delta_{\Sigma\hat C}}& & \Pi K \Sigma\hat C \\ 
& H\hat C \ar[ul]^{\iota_{\hat C}}  \ar[ur]_{\kappa_C}  
} 
\end{equation} 
\end{lem}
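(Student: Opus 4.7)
The plan is to observe that the lemma is essentially an immediate consequence of the definition of $\delta$ in diagram~(\ref{equ:def-delta}), specialised to a carefully chosen cocone leg. Recall that $\delta_X$ is characterised by the requirement that, for every $i$, the lower square of (\ref{equ:def-delta}) commutes, i.e.\ $\delta_X\circ Hc_i = \Pi K c_i^\sharp \circ \kappa_{C_i}$ for every arrow $c_i:\hat C_i\to\Pi X$ with $C_i\in\ccal$ in the diagram presenting $\Pi X$ as a filtered colimit of representables from $\ccal$.

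First I would take $X=\Sigma\hat C$ and pick the single cocone leg $c:=\iota_{\hat C}:\hat C\to \Pi\Sigma\hat C$; this is a legitimate element of the filtered diagram that presents $\Pi\Sigma\hat C$ as a colimit of objects from $\ccal$ (the diagram ranges over all such arrows from representables). Next I would compute the transpose of $c$ under $\Sigma\dashv\Pi$: by the triangle identity $\eps_{\Sigma\hat C}\circ\Sigma\iota_{\hat C}=\id_{\Sigma\hat C}$, we have $c^\sharp = \eps_{\Sigma\hat C}\circ\Sigma\iota_{\hat C}=\id_{\Sigma\hat C}$.

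Then I would specialise the defining square of $\delta$ to this leg. It reads
\[
\delta_{\Sigma\hat C}\circ H\iota_{\hat C} \ =\  \Pi K\,c^\sharp \circ \kappa_C \ =\  \Pi K\,\id_{\Sigma\hat C}\circ \kappa_C \ =\ \kappa_C,
\]
which is exactly the commutativity of the triangle in~(\ref{equ:def-delta-2}).

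The main potential obstacle is simply justifying that $\iota_{\hat C}$ may be used as a cocone leg in the filtered colimit presentation of $\Pi\Sigma\hat C$ over representables; this is routine since every arrow $\hat C\to \Pi X$ out of a representable sits in that diagram, and $\delta_X$ is natural with respect to refinements of the presenting colimit, so the defining equation is stable under the choice of leg. Beyond that observation, the proof is a direct unfolding of the definition of $\delta$ together with one triangle identity for the adjunction $\Sigma\dashv\Pi$.
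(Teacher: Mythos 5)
Your proof is correct and is exactly the argument the paper leaves implicit (the lemma is stated without proof as an immediate consequence of the definition of $\delta$ in \eqref{equ:def-delta}): specialise the defining square to the leg $\iota_{\hat C}:\hat C\to\Pi\Sigma\hat C$, whose transpose is $\id_{\Sigma\hat C}$ by the triangle identity, so that $\delta_{\Sigma\hat C}\circ H\iota_{\hat C}=\Pi K(\id)\circ\kappa_C=\kappa_C$. Your side remark about the legitimacy of this leg is fine, and is in fact trivial here since $\iota_{\hat C}$ is an isomorphism for $C\in\ccal$.
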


\pskip \textbf{The transpose $\bm{\delta^*:\Sigma H\to K\Sigma$}} of
$\delta$ is defined as
\begin{equation}\label{equ:def-delta*} 
  \xymatrix{  \delta^* \ = \ \Sigma H\ar[r] & \Sigma H\Pi\Sigma
    \ar[r]^{\Sigma\delta\Sigma}  & \Sigma\Pi K\Sigma \ar[r] & K\Sigma } 
\end{equation} 
where the unlabelled arrows arise from the unit and counit of the
$\Sigma\dashv\Pi$.  We will show below that $H$-algebras can be
presented as $K$-algebras if there is some, not necessarily natural,
\begin{equation}\label{eq:h}
  h:K\Sigma\to \Sigma H
\end{equation}
such that
\begin{equation}
  \label{eq:delta-h}
  h\circ \delta^*  = \id
\end{equation}

\pskip \textbf{The transformation $\bm{h:K\Sigma\to \Sigma H}$} may
not exist in general, but we can say more if $K$ restricts to
$\ccal$. Then we may require the existence of an $h$ as in the
following diagram
\begin{equation}\label{equ:def-h} 
\xymatrix@C=50pt{ 
  {} A   & K\Sigma A \ar[r]^{h_A} &  \Sigma HA \\ 
  \hat A_k\ar[u]^{d_k}  & K\Sigma \ar[u]_{}\hat A_k \ar[r]^{\eqref{equ:HK-K}}&
  \Sigma H \hat A_k \ar[u]^{} 
} 
\end{equation} 
where the $d_k$ are a filtered colimit. Moreover, the transformation
$h$ does exists if $K$ weakly preserves filtered colimits. We don't
require that $h_A$ be uniquely determined or natural. $h$ allows us to
lift $\Sigma$ to a map on objects
\begin{equation}\label{equ:tildeSigma} 
  \xymatrix{ 
    {\Alg(H)} 
    \ar@/_/[rr]_{\tilde\Sigma}  & &  
    {\Alg(K)} 
  } 
\end{equation}

\begin{lem}\label{lem:delta*}
  If $h$ is as in Diagram~\ref{equ:def-h} and $K$ restricts to
  $\ccal$, see \eqref{equ:HK-restrict}, then \eqref{eq:delta-h} holds.
\end{lem}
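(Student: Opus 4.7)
The plan is to show that on the generators $\hat C$, $C\in\ccal$, both $\delta^*_{\hat C}$ and $h_{\hat C}$ are the two directions of the canonical isomorphism $K\Sigma\hat C\cong \Sigma H\hat C$ from \eqref{equ:HK-K}, hence inverse to one another, and then extend to arbitrary $A\in\Ind\ccal$ by using that $A$ is a filtered colimit of objects of the form $\hat A_k$ together with naturality of $\delta^*$ and the preservation properties of $\Sigma H$.

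First I would compute $\delta^*_{\hat C}$ explicitly. Unfolding \eqref{equ:def-delta*} at $\hat C$ gives
\[
\delta^*_{\hat C} \;=\; \eps_{K\Sigma\hat C}\circ \Sigma\delta_{\Sigma\hat C}\circ \Sigma H \iota_{\hat C},
\]
where the first arrow of \eqref{equ:def-delta*} at $\hat C$ is $\Sigma H\iota_{\hat C}$ and the last is $\eps_{K\Sigma\hat C}$. By Lemma on \eqref{equ:def-delta-2}, $\delta_{\Sigma\hat C}\circ H\iota_{\hat C}=\kappa_C$, so $\Sigma\delta_{\Sigma\hat C}\circ \Sigma H\iota_{\hat C}=\Sigma\kappa_C$, and therefore
\[
\delta^*_{\hat C} \;=\; \eps_{K\Sigma\hat C}\circ\Sigma\kappa_C.
\]
This is precisely the isomorphism whose inverse defines the iso \eqref{equ:HK-K}. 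On the other hand, the defining diagram \eqref{equ:def-h} of $h$ forces $h_{\hat A_k}$ to be the iso \eqref{equ:HK-K} itself (taking the trivial filtered colimit $d_k=\id$), so $h_{\hat C}\circ\delta^*_{\hat C}=\id$ for every $C\in\ccal$.

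Next I would extend to arbitrary $A\in\Ind\ccal$. Write $A$ as the filtered colimit of the $d_k:\hat A_k\to A$ of \eqref{equ:def-h}. The functor $\Sigma H$ preserves filtered colimits ($H$ does, as used to define $\delta$ in \eqref{equ:def-delta}, and $\Sigma$ is a left adjoint), so $\Sigma HA$ is the colimit of the $\Sigma H d_k$. Since $\delta^*$ is natural and $h_A$ satisfies $h_A\circ K\Sigma d_k=\Sigma H d_k\circ(\ref{equ:HK-K})$, we compute, for each $k$,
\[
h_A\circ \delta^*_A\circ \Sigma H d_k
\;=\; h_A\circ K\Sigma d_k\circ \delta^*_{\hat A_k}
\;=\; \Sigma H d_k\circ(\ref{equ:HK-K})\circ\delta^*_{\hat A_k}
\;=\; \Sigma H d_k,
\]
where the last equality uses the base case established above. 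By the universal property of the colimit $\Sigma HA=\colim_k \Sigma H\hat A_k$, this forces $h_A\circ\delta^*_A=\id_{\Sigma HA}$, as required.

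The only delicate step is the base-case calculation, which hinges on recognising the unit $\iota_{\hat C}$ appearing in the definition of $\delta^*$ together with the coherence encapsulated in \eqref{equ:def-delta-2}; once this rewriting is done the rest is a routine colimit argument, and crucially does not require $h$ to be natural, only that it exist and satisfy \eqref{equ:def-h}.
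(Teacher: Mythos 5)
Your proof is correct and follows essentially the same route as the paper: identify the lower row of \eqref{equ:def-h} as $(\delta^*_{\hat A_k})^{-1}$ via \eqref{equ:def-delta-2} and \eqref{equ:HK-restrict}, then combine naturality of $\delta^*$ with the commutativity of \eqref{equ:def-h} and conclude from $\Sigma H d_k$ being a colimit cocone. Your write-up merely makes explicit two points the paper leaves implicit, namely the unfolding of $\delta^*_{\hat C}$ to $\eps_{K\Sigma\hat C}\circ\Sigma\kappa_C$ and the reason $\Sigma H$ preserves the filtered colimit.
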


\begin{proof}
  We first not that the lower row of \eqref{equ:def-h} is the inverse
  $(\delta^*_{A_k})^{-1}$ of the iso $\delta^*_{A_k}$. This is a
  direct consequence of \eqref{equ:HK-restrict} and
  \eqref{equ:def-delta-2}.  Since $\delta^*_{A_k}$ is natural and
  \eqref{equ:def-h} commutes, it follows $h_A\circ \delta^*_A \circ
  \Sigma H d_k = \Sigma H d_k \circ (\delta^*_{A_k})^{-1}\circ
  \delta^*_{A_k}$, hence $h_A\circ \delta^*_A \circ \Sigma H d_k =
  \Sigma H d_k $. Now \eqref{eq:delta-h} follows from $\Sigma H d_k$
  being a colimit.
\end{proof}

\begin{rem}
  Part III will be devoted to the logical interpretation of the
  developments of this section. But let us say here already that $H$
  will represent the syntax of a modal logic, $K\op$ its coalgebraic
  models, and $\delta$ will map a formula to its denotation, that is,
  to a set of states. An element of $\Sigma H$ will be a maximal
  consistent theory $\Phi$ and $h\op:\Sigma\op H\op\to K\op\Sigma\op$
  will map a theory to a state $x$. Then \eqref{eq:delta-h}, that is,
  $(\delta^*)\op\circ h\op=\id$, ensures that the theory of $x$
  coincides with $\Phi$. In other words \eqref{eq:delta-h} says that
  every maximal consistent one-step theory has a one-step model.
\end{rem}

\medskip\noindent\textbf{Representing $H$-algebras as $\Pi$-images of
  $K$-algebras.  }  Denote by $\iota$ the unit of the adjunction
$\Sigma\dashv\Pi$. Our next theorem states that for all algebras
$\alpha:HA\to A$ the following diagram commutes
\begin{equation}\label{equ:jonsson-tarski} 
\xymatrix@C=20pt{ 
   A \ar[d]_{\iota_A} & & & HA \ar[lll]_{\alpha}\ar[d]^{\ H\iota_A} \\ 
  \Pi\Sigma A & \Pi\Sigma HA \ar[l]_{\Pi\Sigma\alpha\ } &  
    \Pi K\Sigma A \ar[l]_{\ \Pi h_A} & H\Pi\Sigma A 
  \ar[l]_{\delta_{\Sigma A}} } 
\end{equation} 
 
\begin{thm}\label{thm:IndPro} 
  Suppose in Diagram~\ref{equ:IndProAlg} that $H$ preserves filtered
  colimits and that $H$ is determined by $K$ on $\ccal$, ie
  \eqref{equ:HK-H} holds.
\begin{enumerate}[\em(1)]
\item Assume there is $h:K\Sigma\to \Sigma H$ satisfying
  \eqref{eq:delta-h}.  Then for any $H$-algebra $(A,\alpha)$ we have
  that $\iota_A:A\to\Pi\Sigma A$ is an $H$-algebra morphism
  $(A,\alpha)\to\tilde\Pi(\Sigma A,\alpha\circ h_A)$.
\item Furthermore, if $K$ restricts to $\ccal$, see
  \eqref{equ:HK-restrict}, and weakly preserves filtered colimits,
  then there is an $h$ satisfying the assumption of item 1.
\item If, morevoer, $K$ preserves filtered colimits then $h$ is
  uniquely determined by \eqref{equ:def-h} and a natural
  transformation. 
\end{enumerate}
\end{thm}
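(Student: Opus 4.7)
The plan is to dispatch the three items in sequence. Item~(1) is the core of the theorem and reduces, via naturality of $\iota$ and the mate correspondence between $\delta$ and $\delta^*$ under $\Sigma\dashv\Pi$, to the hypothesis $h\circ\delta^*=\id$. Items~(2) and~(3) then construct $h$ (non-uniquely and uniquely respectively) from the (weak) preservation by $K$ of the canonical filtered-colimit expression $A=\colim_k\hat A_k$ with $\hat A_k\in\ccal$, combined with the pointwise isomorphism \eqref{equ:HK-K}.

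For item~(1), the goal is the commutativity of \eqref{equ:jonsson-tarski}, i.e.
\[
\Pi\Sigma\alpha\circ\Pi h_A\circ\delta_{\Sigma A}\circ H\iota_A \ = \ \iota_A\circ\alpha.
\]
Naturality of $\iota$ at $\alpha$ rewrites the right-hand side as $\Pi\Sigma\alpha\circ\iota_{HA}$, reducing the task to the unit-level identity $\Pi h_A\circ\delta_{\Sigma A}\circ H\iota_A = \iota_{HA}$. Unpacking definition~\eqref{equ:def-delta*}, a short chase using two applications of the naturality of $\iota$ and the triangle identity $\Pi\eps\circ\iota\Pi=\id_\Pi$ gives the mate identity
\[
\delta_{\Sigma A}\circ H\iota_A \ = \ \Pi\delta^*_A\circ\iota_{HA}.
\]
Substituting and using $h\circ\delta^*=\id$ finishes item~(1). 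This mate identity is the main, and essentially the only, calculation of the theorem.

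For item~(2), express $A$ as a filtered colimit $d_k:\hat A_k\to A$ with $\hat A_k\in\ccal$. Weak preservation of this colimit by $K\Sigma$ means that $(K\Sigma d_k)_k$ is a weakly universal cocone in $\Pro\ccal$; composing each leg with the inverse of the iso~\eqref{equ:HK-K} and then with $\Sigma Hd_k$ produces a competing cocone into $\Sigma HA$, and the weakly universal property yields a (not necessarily unique) $h_A$ satisfying diagram~\eqref{equ:def-h}. Since $K$ restricts to $\ccal$ by hypothesis, Lemma~\ref{lem:delta*} then gives $h\circ\delta^*=\id$, so item~(1) applies.

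For item~(3), genuine preservation of filtered colimits turns the weak colimit above into an ordinary colimit, making $h_A$ unique by the universal property. Naturality in $A$ is proved by fixing $f:A\to A'$ and checking that $\Sigma Hf\circ h_A$ and $h_{A'}\circ K\Sigma f$ agree after precomposition with each $K\Sigma d_k$: since $\hat A_k$ is finitely presentable in $\Ind\ccal$, the arrow $f\circ d_k$ factors through some stage $\hat B_j$ of an analogous colimit $A'=\colim_j\hat B_j$ by a morphism of $\ccal$, and agreement on the stage then follows from the naturality of \eqref{equ:HK-K} on that morphism. The main obstacle throughout is really just the mate identity in item~(1); everything else is an exercise in (co)limit techniques.
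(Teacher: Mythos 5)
Your proposal is correct and follows essentially the same route as the paper: item (1) reduces via naturality of $\iota$ to the mate identity $\delta_{\Sigma A}\circ H\iota_A=\Pi\delta^*_A\circ\iota_{HA}$ (proved by two naturality squares for $\iota$ and a triangle identity) and then invokes $h\circ\delta^*=\id$, while items (2) and (3) obtain $h$ from the (weak) colimit $K\Sigma d_k$ together with the pointwise iso \eqref{equ:HK-K} and Lemma~\ref{lem:delta*}, with uniqueness yielding naturality. The only difference is that you spell out the naturality check in item (3) in more detail than the paper does.
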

 
\begin{proof}
  The second item is immediate from Lemma~\ref{lem:delta*}, with the
  existence of $h$ coming from $K$ mapping the colimit $\Sigma d_k$ to
  a weak colimit. For the third item, we note that if $K\Sigma d_k$ is
  even a colimit, then $h$ is uniquely determined, which in turn
  yields naturality. Thus it remains to prove that that
  Diagram~\ref{equ:jonsson-tarski} commutes. Since $\iota$ is natural
  $\iota_{HA}=\Pi h_A\circ \delta_{\Sigma A} \circ H\iota_A$ does
  suffice,
\begin{equation}\label{equ:figure1} 
\xymatrix@C=25pt@R=35pt{ 
  HA \ar[r]_{H\iota_A}\ar@/^20pt/[rrrrr]^{\iota_{HA}} & H\Pi\Sigma
  A\ar[rr]_{\delta_{\Sigma A}} && \Pi K \Sigma A \ar[rr]_{\Pi h_A } &&
  \Pi\Sigma H A \ar@/^20pt/[ll]^{\Pi\delta^*_A}
}
\end{equation} 
for which in turn, because of \eqref{eq:delta-h}, it suffices to have
$\delta_{\Sigma A}\circ H\iota_A = \Pi\delta^*_A\circ \iota_{HA}$. For
this we first note that in the following diagram the rectangles
consisting of non-dotted arrows commute due to 
$\iota:\Id\to\Pi\Sigma$ being natural.
\begin{equation}
\xymatrix@C=50pt@R=35pt{ 
\Pi\Sigma H \ar[r]^{\Pi\Sigma H \iota }\ar@(ur,ul)@{..>}[rrr]^{\Pi\delta^* } & 
\Pi\Sigma H \Pi\Sigma \ar[r]^{ \Pi\Sigma\delta\Sigma} & 
\Pi\Sigma\Pi K \Sigma \ar@{..>}[r]^{\Pi\eps K\Sigma} & \Pi K \Sigma \\
H \ar[r]^{H\iota }\ar[u]^{\iota H} & 
H\Pi\Sigma\ar[r]^{\delta\Sigma}\ar[u]^{\iota H\Pi\Sigma }  & 
\Pi K \Sigma \ar[u]^{\iota \Pi K\Sigma}\ar@{..>}[ur]^{\id}  
}
\end{equation} 
Further, with $\eps$ denoting the counit of $\Sigma\dashv\Pi$, the
triangle commutes due to the definition of adjunction. We have shown
$\delta_{\Sigma A}\circ H\iota_A = \Pi\delta^*_A\circ \iota_{HA}$.
\end{proof}

\noindent The theorem does not imply that $\iota_A:A\to\Pi\Sigma A$ is
a monomorphism. But this holds in case that $\Ind\ccal$ is $\BA$ or
the category $\DL$ of distributive lattices as in the following
example.
  
\begin{example}\hfill
\begin{enumerate}[(1)]
\item We obtain the setting of J{\'o}nsson and
  Tarski~\cite{jons-tars:bao1} with $\ccal=\BA_\omega$ as in
  Example~\ref{exle:IndPro}, $H$ the functor $L$ from
  Example~\ref{exle:mod-alg} and $K$ the powerset. With this data, our
  theorem states that every Boolean algebra with operators can be
  embedded into a complete Boolean algebra whose carrier is a
  powerset.
 
\item We obtain the setting of Gehrke and
  J{\'o}nsson~\cite{gehr-jons:DLO} with $\ccal$ finite distributive
  lattices. For $H$ one can take, for example, the Vietoris functor of
  Johnstone~\cite{johnstone:vietoris-locales}, restricted to $\DL$, and
  for $K$ the convex powerset functor on posets.
\end{enumerate}
\end{example}

\begin{rem}
  Compared to the earlier version of the paper, we reorganised the
  proof of the theorem and made condition \eqref{eq:delta-h}
  explicit. This allows us to strengthen the statement of the theorem
  and also to compare it precisely to \cite[Theorem 3]{kkp:calco05},
  which is now the special case of Theorem~\ref{thm:IndPro}.1 where
  $\ccal=\BA_\omega$. Indeed, for $\ccal=\BA_\omega$, the existence of
  an $h$ satsifying \cite[Definition 1]{kkp:calco05} is equivalent to
  the existence of an $h$ satisfying \eqref{eq:delta-h}. The
  categorical formulation \eqref{eq:delta-h} of this condition using
  the transpose $\delta^*$ (whose importance for coalgebraic logic was
  shown by Klin~\cite{klin:mfps07} where it is called $\rho^*$) is
  new.
\end{rem}

 \section{Introduction to Part III: Functorial Coalgebraic Logic}

\noindent%\textbf{Overview. }
We develop the point of view that if a coalgebra is given wrt a
functor $T:\Set\to\Set$ then a (finitary, classical) modal logic is
given by a functor $L:\BA\to\BA$ on Boolean algebras\footnote{As
  opposed to Part II, this part will benefit from a notation working
  with contravariant functors $P:\Set\to\BA$ and $S:\BA\to\Set$
  instead of covariant functors $\Pi:\Set\op\to\BA$ and
  $\Sigma:\BA\to\Set\op$.}
\begin{equation}
\xymatrix{ 
{\BA} \ar@(dl,ul)[]^{L \ } 
  \ar@/_/[rr]_{S}  & &  
{\Set}\ar@/_/[ll]_{P} \ar@(dr,ur)[]_{\ T}  
} 
\end{equation} 
together with the semantics
$$\delta:LP\too PT.$$
We call such a logic $(L,\delta)$ abstract, because no concrete
syntactic description has been fixed. Such a concrete description
arises from a presentation: A presentation of $\BA$ describes $\BA$ as
a category $\Alg(\Sigma_\BA,E_\BA)$ of algebras for a signature and
equations in the usual sense; this gives us classical propositional
logic with connectives from $\Sigma_\BA$ and axioms from
$E_\BA$. Moreover, if $L$ has a presentation
$\langle\Sigma_L,E_L\rangle$, this gives us the concrete logic with
operation symbols from $\Sigma_\BA+\Sigma_L$ and axioms $E_\BA+E_L$, see Theorem~\ref{thm:present-fun}.

\medskip\noindent Section~\ref{sec:ml-set} will show how to define
$(L_T,\delta_T)$ from an arbitrary functor $T:\Set\to\Set$ and then
give conditions on $T$ under which the logic $(L_T,\delta_T)$ is
strongly complete (for the global consequence relation).
From Part I, we will use that all $(L_T,\delta_T)$ have a
presentation, ie, the abstract logic indeed arises from a concrete
logic. From Part II, we will use the J\'onsson-Tarski theorem which
will provide us,  as a corollary, with the strong completeness result.

\medskip\noindent To keep this part self-contained, the remainder of
this section contains preliminaries on coalgebras and the next section
details carefully the relationship between abstract logics
$(L,\delta)$ and concrete logics given by presentations.

\bigskip\noindent\textbf{Preliminaries on Coalgebras. }
Coalgebras for a functor provide a uniform account of different kinds
of transition systems and Kripke structures.

\begin{definition}[coalgebra]
  The category $\Coalg(T)$ of coalgebras for a functor $T$ on a
  category $\xcal$ has as objects arrows $\xi:X\to TX$ in $\xcal$ and
  morphisms $f:(X,\xi)\to(X',\xi')$ are arrows $f:X\to X'$ such that
  $Tf\circ\xi=\xi'\circ f$.
\end{definition}

The paradigmatic example are coalgebras $X\to\Pow X$ for the powerset
functor. They can be considered as a set $X$ with a relation
$R \subseteq X\times X$, ie as (unlabelled) transition systems or Kripke
frames. Similarly, $X\to\Pow(C\times X)$ is a transition system with
transitions labelled with elements of a constant set $C$. If $2$
denotes some two-element set, then $X\to 2\times X^C$ is a
deterministic automaton with input from $C$ and a labelling of states
as accepting/non-accepting. To cover all these examples and many more
we can consider the following inductively defined class of `type
functors'.

\begin{example}[gKPF]\label{exle:T-Set}
  A generalised Kripke polynomial functor (gKPF) $T:\Set\to\Set$ is
  built according to
$$T::= \Id \mid K_C \mid T+T \mid T\times T \mid T\circ T \mid \Pow \mid \hcal $$
where $\Id$ is the identity functor, $K_C$ is the constant functor
that maps all sets to a finite set $C$, $\Pow$ is covariant powerset
and $\hcal$ is $2^{2^-}$. 
\end{example}

\begin{rem}
  The term `Kripke polynomial functor' was coined in R\"o\ss iger
  \cite{roessiger:cmcs00}. We add the functor $\hcal$.
  $\hcal$-coalgebras are known as neighbourhood frames in modal logic
  and are investigated, from a coalgebraic point of view, in Hansen
  and Kupke~\cite{hans-kupk:cmcs04}.
\end{rem}

If we can consider the carriers $X$ of the coalgebras to have elements, ie if there is a
forgetful functor $\xcal\to\Set$, each functor $T$ induces a
corresponding notion of bisimilarity or behavioural equivalence.

\begin{definition}[bisimilarity]\label{def:bisim}
  Two states $x_i$ in two coalgebras $X_i$ are $T$-\emph{bisimilar} if
  there is a coalgebra $(X',\xi')$ and there are coalgebra morphisms
  $f_i:(X_i,\xi_i)\to (X',\xi')$ such that $f_1(x_1)=f_2(x_2).$
\end{definition}

\begin{rem}
  In other words, two states are bisimilar if they are in the same
  equivalence class of the equivalence relation generated by pairs
  $(x,f(x))$ where $f$ ranges over all coalgebra morphisms. More
  categorically, two states are bisimilar if they are in the same
  connected component of the category of elements of
  $U:\Coalg(T)\to\Set$. If $U$ has a colimit $Z$, then $Z$ classifies
  $T$-bisimilarity and is the carrier of the final coalgebra.

  This notion of bisimilarity has sometimes been called
  \emph{behavioural equivalence}, since only for weak pullback
  preserving functors it is the case that behavioural equivalence is
  characterised by coalgebraic bisimulations \cite{rutten:uc-j}. On
  the other hand, in cases where the functor $T$ does not preserve
  weak pullbacks, coalgebraic bisimulations are not well-behaved and
  it has been argued since \cite{kurz:phd}, but see also
  \cite{hans-kupk-pacu:calco07-j} for a study of $2^2$-coalgebras,
  that behavioural equivalence is the better notion in such
  situations. We thus find it defensible to choose the more
  recognisable name of bisimilarity for  behavioural
  equivalence.
\end{rem}

In all of the examples above, coalgebraic bisimilarity coincides with
the `natural' notion of equivalence. For, $T=\Pow$ this goes back to
Aczel~\cite{aczel:nwfs,acze-mend:fct}, for deterministic automata two
states are bisimilar iff they accept the same language
(Rutten~\cite{rutten:uc-j}) and $\hcal$-coalgebras have been
investigated by Hansen and Kupke~\cite{hans-kupk:cmcs04}.
We now turn to logics for $T$-coalgebras.

\section{Functorial Modal Logics for Coalgebras}
\label{sec:ml-abs-concr}
\noindent In this section we present a general framework for logics
for $T$-coalgebras. We do this in two steps.

\begin{enumerate}[(1)]
\item First, abstracting from syntax, we simply consider as formulas
  of the logic the elements of the initial $L$-algebra, where $L$ is a
  functor which is dual to $T$ in a suitable sense.

\item Second, we obtain a syntax and a proof system for the abstract
  logic from a presentation of the functor $L$. We call these logics
  the concrete logics of $T$-coalgebras.
\end{enumerate}

\noindent The point of this separation is that it allows us to prove
results about concrete logics in a presentation-independent way on the
level of the abstract logics. An example of this is presented in the
next section.

\subsection{Abstract modal logics}
 
\noindent We are interested in the following situation
\begin{equation}\label{diag:XA} 
\xymatrix{ 
  {\Coalg(T)} \ar@/^/[r]^{\tilde P} \ar[d]_{}  
  & {\Alg(L)\ } %\ar@/^/[l]^{\tilde S}  
    {\ar[d]^{}}\\ 
  {\ \xcal} \ar@(dl,ul)[]^{T} \ar@/^/[r]^{P} \ar[d]_{} 
  & {\acal\ } \ar@/^/[l]^{S} \ar@(dr,ur)[]_{L} {\ar[d]^{U}}\\ 
  {\Set} & {\Set} %\ar@/^/[u]^{F} 
} 
\end{equation} 
where $P$ and $S$ are contravariant functors.

\begin{example}\label{exle:XA}\hfill
\begin{enumerate}[(1)]
\item $\xcal=\Set$, $\acal=\BA$. $PX$ is the powerset of $X$ and $SA$
  is the set of ultrafilters on $A$. On maps both $P$ and $S$ act as
  inverse image. It is also useful to think of $PX$ as the set
  $\Set(X,2)$ of functions from $X$ to a two-element set $2$ and to
  think of $SA$ as the set $\BA(A,\twobb)$ of algebra morphisms from
  $A$ to the two-element Boolean algebra $\twobb$.  
\item $\xcal$ is the category $\Stone$ of Stone spaces (compact
  Hausdorff spaces that have a basis of clopens), $\acal=\BA$, $PX$ is
  the set of clopens of $X$ and $SA$ is the space of ultrafilters on
  $A$ with a basis given by $\{\{u\in SA\mid a\in u\} \mid a\in
  A\}$. In this situation, $\xcal$ and $\acal$ are dually equivalent.
\end{enumerate}
\end{example}

\noindent Kripke frames arise under (1) and descriptive (general) frames \cite{goldblatt:mm,brv:ml}
under (2). The latter situation has been studied from a coalgebraic point of view in
\cite{kkv:cmcs03-j}, whereas this paper will focus on the former.

\begin{rem}\label{rmk:XA}
  Diagram \ref{diag:XA} has too many possible variations to
  give---at this stage---an axiomatic account of the properties the
  data in (\ref{diag:XA}) should satisfy in order to give rise to
  coalgebraic logics. We indicate some of the possible variations.
\begin{enumerate}[(1)]
\item In Example~\ref{exle:XA}.(1) above, one could keep $\xcal=\Set$
  but take $\acal$ to be eg distributive lattices or semi-lattices. A
  sufficient set of conditions for this set-up is the following:
  $\xcal=\Set$ and $\acal$ any variety such that there is
  $P:\Set\to\acal$ with $UPX=2^X$. It then follows that $P$ has an
  adjoint $SA=\acal(A,P1)$ but one would want to require that the unit
  $\iota_A:A\to PSA$ is an embedding.
\item In Example~\ref{exle:XA}.(2) above, one could work with other
  dualities such as the one of spectral spaces and distributive
  lattices.
\item One could also replace $\Set$ by some other categories such as
  $\mathsf{Poset}$.
\end{enumerate}
\end{rem}

\noindent To continue the discussion of the data in (\ref{diag:XA}),
we assume that $\acal$ is a variety in the sense of Part I and that
$L$ is a sifted colimit preserving functor on $\acal$, that is, $L$ is
determined by its action on finitely generated free algebras. Then the
forgetful functor $U^a:\Alg(L)\to\Set$ has a left adjoint $F^a$ and we
consider $U^aF^aV$ as the set of formulas of $L$ in propositional
variables $V$. The semantics of $L$ in terms of $T$-coalgebras is
specified by choosing a natural transformation
\begin{equation}\label{equ:delta}
\delta:LP\to PT,
\end{equation}
where we assume, as in Remark~\ref{rmk:XA}.(1), that $P$ is a functor
satisfying $UPX=2^X$. Intuitively, $\delta$ takes syntax from $LPX$
and maps it to its interpretation as a subset of $TX$. Technically,
$\delta$ allows us to extend the functor $P:\xcal\to\acal$ to a
functor $\tilde P:\Coalg(T)\to\Alg(L)$, where $\tilde P$ maps a
coalgebra $(X,\xi)$ to the $L$-algebra
\begin{equation}\label{equ:tildep}
P\xi\circ\delta_X\ : \ LPX\to PTX \to PX
\end{equation}
Consequently, every formula, ie every element of the free $L$-algebra
$F^aV$, has a unique interpretation as an element of $PX$, ie a subset
of $X$. This is summarised in the following definition.
\begin{definition}\label{def:abstract-logic} 
  Let $T:\Set\to\Set$ be a functor, let $L:\acal\to\acal$ be a sifted
  colimit preserving functor on a variety $U:\acal\to\Set$, and let
  $P:\Set\to\acal$ be a contravariant functor satisfying
  $UPX=2^X$. Further, let $F^a$ be a left-adjoint of the forgetful
  functor $U^a:\Alg(L)\to\Set$ and let $\delta:LP\to PT$ be a natural
  transformation. We call $(L,\delta)$ an (abstract) logic for
  $T$-coalgebras. The formulas of the logic are the elements of
  $U^aF^aV$. Given a coalgebra $(X,\xi)$, we write
  $\sem{-}_{(X,\xi,h)}$ for the morphism $F^aV\to\tilde P(X,\xi)$
  determined by the valuation $h:V\to UPX$.  We define
$$(X,\xi,h)\models\phi\lesssim\psi$$ 
if $\sem{\phi}_{(X,\xi,h)}\subseteq \sem{\psi}_{(X,\xi,h)}$. For a
collection $\Gamma$ of `sequents' $\{\phi_i\lesssim\psi_i\mid i\in
I\}$, we write
\begin{equation}\label{eq:global-consequence}
\Gamma\models(\phi\lesssim\psi)
\end{equation}
for the global consequence relation, that is, if for all
$T$-coalgebras $(X,\xi)$ and all valuations $h$ we have that
$(X,\xi,h)\models\Gamma$ only if $(X,\xi,h)\models
(\phi\lesssim\psi)$. We also write $\models (\phi\lesssim\psi)$ for
$\emptyset\models(\phi\lesssim\psi)$.
\end{definition}

\begin{rem}
  For $\acal$ being semi-lattices or distributive lattices
  $\phi\lesssim\psi$ can be rendered as the equation
  $\phi\wedge\psi=\phi$. In case $\acal=\BA$, since Boolean algebra
  has implication, it is enough to consider sequents of the form
  $\top\lesssim\psi$. In this case we drop the `$\top\lesssim$' and write
  $\models\psi$, etc.
\end{rem}
 
\begin{prop}\label{prop:abstract-logic} 
  The logic for $T$-coalgebras given in
  Definition~\ref{def:abstract-logic} respects bisimilarity.
\end{prop}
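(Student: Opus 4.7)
The plan is to show that the denotation of a formula pulls back along any coalgebra morphism; bisimilarity invariance then follows immediately from Definition~\ref{def:bisim} by mediating through the common quotient $(X',\xi')$.

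First I would verify that $\tilde P$ is genuinely a contravariant functor $\Coalg(T)\to\Alg(L)$. For a coalgebra morphism $f:(X,\xi)\to(X',\xi')$ one needs $Pf:PX'\to PX$ to be an $L$-algebra morphism from $\tilde P(X',\xi')$ to $\tilde P(X,\xi)$, i.e.\ the outer rectangle of
\[
\xymatrix{
LPX' \ar[r]^{\delta_{X'}} \ar[d]_{LPf} & PTX' \ar[r]^{P\xi'} \ar[d]_{PTf} & PX' \ar[d]^{Pf} \\
LPX \ar[r]_{\delta_X} & PTX \ar[r]_{P\xi} & PX
}
\]
commutes. The left square is naturality of $\delta$ and the right square follows from $Tf\circ\xi=\xi'\circ f$ by applying $P$.

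Next I would use freeness of $F^aV$. Given any valuation $h:V\to UPX'$, the two $L$-algebra morphisms $\tilde Pf\circ\sem{-}_{(X',\xi',h)}$ and $\sem{-}_{(X,\xi,\,UPf\circ h)}$ from $F^aV$ to $\tilde P(X,\xi)$ both extend the map $v\mapsto UPf(h(v))$ on generators, so by the universal property of $F^aV$ they coincide. Reading this equality pointwise using $UPX=2^X$ gives, for every formula $\phi$ and every $x\in X$,
\[
x\in\sem{\phi}_{(X,\xi,\,UPf\circ h)}\quad\Longleftrightarrow\quad f(x)\in\sem{\phi}_{(X',\xi',h)}.
\]

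Finally, if $x_1\in X_1$ and $x_2\in X_2$ are $T$-bisimilar via $f_i:(X_i,\xi_i)\to(X',\xi')$ with $f_1(x_1)=f_2(x_2)$, I pick any valuation $h$ on $X'$ and apply the previous equivalence to both $f_1$ and $f_2$: each side reduces to membership of the common image $f_i(x_i)$ in $\sem{\phi}_{(X',\xi',h)}$, so $x_1$ and $x_2$ satisfy exactly the same formulas (with respect to the pulled-back valuations $UPf_i\circ h$). The same argument lifted to pairs gives preservation of the sequent relation $\phi\lesssim\psi$ along coalgebra morphisms, and hence of the global consequence relation \eqref{eq:global-consequence}. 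The only real work is Step~1 (that $\tilde Pf$ is an $L$-algebra morphism); once that is in place, the remainder is purely formal, driven by the universal property of the free $L$-algebra $F^aV$.
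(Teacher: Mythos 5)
Your proof is correct and follows essentially the same route as the paper, which simply says the claim "is immediate from the universal property of $F^aV$" after fixing a coalgebra morphism $f$ and compatible valuations. You have merely spelled out the details the paper leaves implicit, namely that $Pf$ is an $L$-algebra morphism $\tilde P(X',\xi')\to\tilde P(X,\xi)$ (naturality of $\delta$ plus $P$ applied to $Tf\circ\xi=\xi'\circ f$) and the subsequent appeal to freeness of $F^aV$.
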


\begin{proof}
  Let $f:(X,\xi)\to(X',\xi')$ be a coalgebra homomorphism and let
  $h:V\to UPX$, $h':V\to UPX'$ be two valuations such that $UPf\circ
  h'=h$. According to Definition~\ref{def:bisim}, we have to show that
  $x\in\sem{\phi}_{(X,\xi,h')} \ \Leftrightarrow \
  f(x)\in\sem{\phi}_{(X',\xi',h')}$. But this is immediate from the
  universal property of $F^aV$.
\end{proof}
 
\newcommand{\modop}{\heartsuit}
\newcommand{\arity}{\mathrm{arity}}
\newcommand{\sconcrete}{^{\mathit{concrete}}}
\newcommand{\sabstract}{^{\mathit{abstract}}}

\subsection{Concrete modal logics}\label{sec:concr-log}
 
\noindent We restrict our attention now to set-coalgebras and to
logics over $\BA$. Fix a set of operations $\Sigma_\BA$ and equations
$E_\BA$ describing $\BA$, that is, $\BA=\Alg(\Sigma_\BA, E_\BA)$. We
assume that the constants $\bot,\top$ are in $\Sigma_\BA$.

Conceptually, a concrete logic is given by a finitary presentation
$\langle \Sigma,E \rangle$ of a functor $L$ in the sense of
Definition~\ref{def:present-fun} together with a natural
transformation $LP\to PT$ as in (\ref{equ:delta}). Explicitely, this
means that a concrete logic is given by the following data.

\begin{definition}[concrete logic for
  $T$-coalgebras]\label{def:clogic} 
  A concrete logic for $T$-coalgebras is given by a triple
  $(\Sigma,E,\Delta)$ as follows.
\begin{desCription}
\item\noindent{\hskip-12 pt\bf modal operators:}\ A set $\Sigma$ of operation symbols and a map
  $\arity:\Sigma\to\omega$ assigning to each operation symbol a finite
  arity.
\item\noindent{\hskip-12 pt\bf equations (axioms):}\ A set $E$ of equations $s=t$ of rank 1 as in
  Definition~\ref{def:rank1}. That is, $s,t$ are terms over
  $\Sigma_\BA+\Sigma$ and variables $V$ in which each variable is in
  the scope of precisely one modal operator (operation symbol from
  $\Sigma$).
\item\noindent{\hskip-12 pt\bf semantics:}\ A set $\Delta$ containing for each $\modop\in\Sigma$
  a natural transformation, also called a `predicate lifting',
  \begin{equation}\label{equ:sem-def-clogic}
\sem{\modop}:(2^{\arity(\modop)})^X\to
    2^{TX}.
  \end{equation}
\end{desCription}
The equations $E$ are required to be sound with respect to the
semantics $\Delta$ in the following sense. We lift $\sem{-}$ from
modal operators $\modop$ to terms $s$ of rank 1,
$$\sem{s}: (2^{|V|})^X\to 2^{TX}. $$
In detail, given a valuation $h:V\to 2^X$ and a term $s$, define
$\sem{s}_h\subseteq TX$ inductively as follows. First, $h$ lifts to a
function $\bar h$ on $\BA$-terms by interpreting Boolean operations
set-theoretically. Modal operators are then interpreted according to
\begin{equation}\label{equ:semh}
  \sem{\modop(s_1,\ldots s_{\arity(\modop)})}_h =
  \sem{\modop}(\bar h(s_1), \ldots \bar h(s_{\arity(\modop)})) 
\end{equation}
Then an equation $s=t$ in variables from $V$ is sound, if
$\sem{s}_h=\sem{t}_h$ for all $h:V\to 2^X$.
\end{definition}

More examples will be given in the next section, here we only present
the fundamental one \cite[Def's 1.9, 1.13]{brv:ml}, which
translates into our setting as follows.

\begin{example}\label{exle:bml}
  The basic modal logic for $\pcal$-coalgebras (Kripke
  frames) is given by
\begin{desCription}
\item\noindent{\hskip-12 pt\bf modal operators:}\ one unary operator $\Box$,
\item\noindent{\hskip-12 pt\bf equations (axioms):}\ two equations:
$\Box\top=\top$ with $V=\emptyset$ and $\Box(a\wedge b)= (\Box a)
\wedge (\Box b)$ with $V=\{a,b\}$,
\item\noindent{\hskip-12 pt\bf semantics:}\ 
\begin{align}
  \sem{\Box}  : 2^X &\to 2^{\pcal X} \label{equ:bml}\\
                Y   &\mapsto \{Z\subseteq X \mid Z\subseteq Y\}.
\end{align}
\end{desCription}
\end{example}

\begin{rem}\label{rmk:clogic}\hfill
\begin{enumerate}[(1)]
\item The semantics (\ref{equ:sem-def-clogic}) can be written, in the
  notation of Diagram~\ref{diag:XA}, as a natural transformation
  $U(PX)^n\to UPTX$. This gives a notion of predicate lifting for
  other categories than $\Set$ such as $\Stone$ and what follows
  applies to this setting as well.
\item The signature $\Sigma$ is just a collection of $n$-ary modal
  operators in the usual sense of modal logic, called a similarity
  type in \cite[Def 1.11]{brv:ml}.
\item In modal logic, axioms are usually given by formulas, not by
  equations. The translation between the two formats is a standard
  procedure \cite[Section 5.1]{brv:ml}. In a nutshell, each term in
  operation symbols from $\Sigma_\BA+\Sigma$ is considered as a
  formula. Equations $s=t$ are turned into formulas $s\biimp
  t$. Conversely, any formula $s$ can be read as an equation $s=\top$.
\item Without restricting $E$ to rank 1 the interpretation $\sem{s}_h$
  would not be well-defined.
\item \label{rmk:clogic:i4} The coalgebraic semantics (see below) of
  modal operators in terms of predicate liftings goes back to
  Pattinson~\cite{pattinson:cml-j} and, in the $n$-ary case, to
 Schr\"oder~\cite{schroeder:fossacs05}.
\end{enumerate}
\end{rem}

\noindent The definition of the language below is standard, see
\cite[Def 1.12]{brv:ml}. For the proof system we use equational logic,
see \cite[Def B.20]{brv:ml}.

\begin{definition}[language, proof system]\label{def:lang-proofsyst}
  Let $(\Sigma,E,\Delta)$ be a logic for $T$-coalgebras. The language
  $\lcal(\Sigma,E)$ is the set of terms built from operations
  $\Sigma_\BA+\Sigma$ and the variables that appear in $E$. Terms are
  also called formulas. The proof system is that of equational logic
  plus the additional equations $E_\BA+E$ and we write
  $\vdash_{(\Sigma,E)} s=t$ if an equation is derivable. We also write
  $\vdash_{(\Sigma,E)} \phi$ if $\phi$ is a formula and
  $\vdash_{(\Sigma,E)} \phi=\top$.
\end{definition}

\begin{rem}
  In modal logic, the standard proof system is not equational logic,
  but the two systems are equivalent in terms of the theorems that can be
  derived, see Chapter~5 and Appendix~B of \cite{brv:ml} for full
  details.
\end{rem}

The next definition reformulates \cite[Def 5.19]{brv:ml} in our
notation. As in Definition~\ref{def:lang-proofsyst}, the semantic
component $\Delta$ is not needed here, but only in
Definition~\ref{def:coalg-sem}.

\begin{definition}[algebraic semantics]\label{def:alg-sem}
  Let $(\Sigma,E,\Delta)$ be a logic for $T$-coalgebras. The category
  of modal algebras of $(\Sigma,E,\Delta)$ is the category
  $\Alg(\Sigma_\BA+\Sigma,E_\BA+E)$ of algebras given by the signature
  $\Sigma_\BA+\Sigma$ and satisfying the equations $E_\BA+E$.

  In particular, there is a map $[-]:\lcal(\Sigma,E)\to UFV$ taking
  formulas to the carrier $U^cF^cV$ of the free
  $\Alg(\Sigma_\BA+\Sigma,E_\BA+E)$-algebra $F^cV$ over the variables
  $V$.
\end{definition}

Next we give the coalgebraic semantics of a concrete logic.

\begin{definition}[coalgebraic semantics]\label{def:coalg-sem}
  Let $(\Sigma,E,\Delta)$ be a logic for $T$-coalgebras and let
  $(X,\xi)$ be a $T$-coalgebra. Then
  $\sem{\phi}\sconcrete_{(X,\xi,h)}$ is defined by induction over
  $\phi\in\lcal(\Sigma,E)$ with Boolean clauses as usual and
\begin{equation}\label{equ:semmodop-coalg}
  \sem{\modop(\phi_1,\ldots\phi_{\arity(\modop)})}\sconcrete_{(X,\xi,h)} =
  P\xi\circ\sem{\modop}(\sem{\phi_1}\sconcrete_{(X,\xi,h)}, \ldots 
  \sem{\phi_{\arity(\modop)}}\sconcrete_{(X,\xi,h)}) 
\end{equation}
for each $\modop\in\Sigma$.
\end{definition}

\begin{rem}
  If  the semantics of an $n$-ary modal operator $\modop$ is expressed with the help of the Yoneda lemma
  by a map ${T(2^n)}\to 2$, then
  (\ref{equ:semmodop-coalg}) takes a list of $n$-ary predicates
  $\phi:X\to 2^n$ and maps it
  to $$X\stackrel{\xi}{\too}TX\stackrel{T\phi}{\too}T(2^n)\too 2.$$
\end{rem}

\begin{example}
  Going back to Example~\ref{exle:bml}, we take now $T=\pcal$ so that
  $(X,\xi)$ is a Kripke frame and $\xi(x)$ is the set of successors of
  $x$. Recalling that $P\xi=\xi^{-1}$ it follows immediately from the
  definitions that instantiating (\ref{equ:semmodop-coalg}) with
  (\ref{equ:bml}) gives
$$\sem{\Box\phi}\sconcrete_{(X,\xi,h)}=\{x\in X \mid \xi(x) \subseteq \sem{\phi}\sconcrete_{(X,\xi,h)}\},$$ 
which is the usual definition of the semantics of $\Box$.
\end{example}

In Definition~\ref{def:coalg-sem}, the $\sem{\modop}\in\Delta$
provided the semantics of the modal operators. Alternatively, we can
think of $\Delta$ as giving us a functor from coalgebras to algebras,
mapping a coalgebra to its `complex algebra' \cite[Def
5.21]{brv:ml}. That these two points of view are essentially the same
is the contents of Proposition~\ref{prop:sconcrete-alg-coalg}.

\begin{definition}[complex algebra]\label{def:complex-alg}
  Let $(\Sigma,E,\Delta)$ be a logic for $T$-coalgebras and let
  $(X,\xi)$ be a $T$-coalgebra. Then the complex algebra $\tilde
  P^c(X,\xi)$ of $(X,\xi)$ is the
  $\Alg(\Sigma_\BA+\Sigma,E_\BA+E)$-algebra with carrier $PX$ and
  which interprets operations $\modop\in\Sigma$ according to
\begin{equation}\label{equ:complex-alg}
  \modop^{\tilde P^c(X,\xi)}(a_1,\ldots a_{\arity(\modop)}) = P\xi\circ\sem{\modop}(a_1,\ldots a_{\arity(\modop)}).
\end{equation}
\end{definition}

The relationship between algebraic and coalgebraic semantics follows
the classical pattern \cite[Prop 5.24, Thm 5.25]{brv:ml}, again
replacing Kripke frames by coalgebras.

\begin{prop}[relationship of algebraic and coalgebraic
  semantics]\label{prop:sconcrete-alg-coalg}
  Let $(\Sigma,E,\Delta)$ be a logic for $T$-coalgebras and let
  $(X,\xi)$ be a $T$-coalgebra. Any valuation $h:V\to 2^X$ induces a
  morphism $\mathit{mng}_h:F^cV\to\tilde P^c(X,\xi)$ from the free
  $\Alg(\Sigma_\BA+\Sigma,E_\BA+E)$-algebra over $V$ to the complex
  algebra of $(X,\xi)$. For all $\phi\in\lcal(\Sigma,E)$ we have
\begin{equation}\label{equ:mng}
  \sem{\phi}\sconcrete_{(X,\xi,h)}=\mathit{mng}_h([\phi])
\end{equation}
Consequently, the equation $\phi=\top$ holds in the algebra $\tilde
P(X,\xi)$ iff $\sem{\phi}\sconcrete_{(X,\xi,h)}=X$.
\end{prop}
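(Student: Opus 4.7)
The plan is to obtain $\mathit{mng}_h$ from the universal property of the free algebra and then to verify the displayed equation by induction on the structure of $\phi$.

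First, observe that the carrier of $\tilde P^c(X,\xi)$ is $PX$, and by Definition~\ref{def:complex-alg} this set carries an $\Alg(\Sigma_\BA+\Sigma,E_\BA+E)$-algebra structure: the $\Sigma_\BA$-operations come from $PX\in\BA$, and each $\modop\in\Sigma$ is interpreted via \eqref{equ:complex-alg}. Soundness of the equations in $E_\BA+E$ with respect to $\tilde P^c(X,\xi)$ needs to be checked; for $E_\BA$ this is automatic since $PX$ is a Boolean algebra, and for $E$ it follows from the soundness assumption on $\Delta$ in Definition~\ref{def:clogic} together with \eqref{equ:complex-alg}. Hence $\tilde P^c(X,\xi)$ really is an object of $\Alg(\Sigma_\BA+\Sigma,E_\BA+E)$. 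The valuation $h:V\to 2^X = U^cPX$ then extends uniquely, by the universal property of $F^c$, to a homomorphism $\mathit{mng}_h:F^cV\to\tilde P^c(X,\xi)$.

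Next, I would prove \eqref{equ:mng} by induction on the term structure of $\phi\in\lcal(\Sigma,E)$. For a variable $v\in V$, both sides equal $h(v)$: the right side by the definition of $\mathit{mng}_h$ as the extension of $h$, the left side by the Boolean clauses in Definition~\ref{def:coalg-sem}. For Boolean connectives, the left side is defined set-theoretically on $PX$, and the right side agrees because $\mathit{mng}_h$ is a $\BA$-homomorphism into $\tilde P^c(X,\xi)$ whose underlying $\BA$-structure is that of $PX$. For the modal case $\phi=\modop(\phi_1,\ldots,\phi_n)$, we compute
\[
\mathit{mng}_h([\modop(\phi_1,\ldots,\phi_n)]) = \modop^{\tilde P^c(X,\xi)}(\mathit{mng}_h([\phi_1]),\ldots,\mathit{mng}_h([\phi_n]))
\]
using that $\mathit{mng}_h$ is a $\Sigma$-homomorphism, and by \eqref{equ:complex-alg} the right-hand side equals $P\xi\circ\sem{\modop}(\mathit{mng}_h([\phi_1]),\ldots,\mathit{mng}_h([\phi_n]))$. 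By the induction hypothesis this coincides with $P\xi\circ\sem{\modop}(\sem{\phi_1}\sconcrete_{(X,\xi,h)},\ldots,\sem{\phi_n}\sconcrete_{(X,\xi,h)})$, which is exactly $\sem{\modop(\phi_1,\ldots,\phi_n)}\sconcrete_{(X,\xi,h)}$ by \eqref{equ:semmodop-coalg}.

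For the final clause, note that $\phi=\top$ holds in $\tilde P^c(X,\xi)$ under the valuation $h$ iff $\mathit{mng}_h([\phi]) = \mathit{mng}_h([\top])$. The right-hand side is the top element of the Boolean algebra $PX$, which is $X$, and the left-hand side equals $\sem{\phi}\sconcrete_{(X,\xi,h)}$ by \eqref{equ:mng}. There is no real obstacle here: the only non-routine point is checking that $E$ is sound on $\tilde P^c(X,\xi)$, which is ensured by the soundness requirement already built into Definition~\ref{def:clogic}, so the whole argument reduces to a straightforward induction powered by the universal property of $F^cV$.
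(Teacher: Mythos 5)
Your proof is correct and follows exactly the route the paper intends: the paper's own proof simply states that this is ``a routine induction as in \cite[Prop 5.24]{brv:ml}, using (\ref{equ:semmodop-coalg}) and (\ref{equ:complex-alg})'', and your write-up fills in precisely that induction, together with the (correctly identified) only non-routine point that soundness of $E$ makes $\tilde P^c(X,\xi)$ a genuine object of $\Alg(\Sigma_\BA+\Sigma,E_\BA+E)$ so that the universal property of $F^cV$ applies.
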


\begin{proof}
  The proof is a routine induction as in \cite[Prop 5.24]{brv:ml},
  using (\ref{equ:semmodop-coalg}) and (\ref{equ:complex-alg}).
\end{proof}

\begin{thm}[equivalence of abstract and concrete logics]
\label{thm:equiv-abs-concr-log}
For each abstract logic $(L,\delta)$ there is a concrete logic
$(\Sigma,E,\Delta)$, and for each concrete logic there is an abstract
logic $(L,\delta)$, such that concrete and abstract semantics agree. 
\end{thm}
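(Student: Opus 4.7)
The plan is to prove the equivalence by separating the correspondence into an algebraic part (moving between $L$ and $\langle\Sigma,E\rangle$) and a semantic part (moving between $\delta$ and $\Delta$), and finally checking that the two complex-algebra constructions agree. The algebraic part is supplied directly by Theorem~\ref{thm:present-sift} together with Remark~\ref{rmk:pres-from-functor}: from a sifted-colimit-preserving $L:\BA\to\BA$ one reads off $\Sigma_n = ULFn$ and $E_V = \ker(q_{FV}:FGUFV\to LFV)$, and conversely every finitary presentation $\langle\Sigma,E\rangle$ yields such an $L$ via Definition~\ref{def:present-fun}. Theorem~\ref{thm:present-fun} then identifies $\Alg(L)$ with $\Alg(\Sigma_\BA+\Sigma,E_\BA+E)$, and this isomorphism will drive the semantic comparison below.

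For the semantic part, starting from $(L,\delta)$ I would associate to each $\modop\in\Sigma_n = ULFn$ the predicate lifting
\[
\sem{\modop}_X:(UPX)^n \longrightarrow UPTX,\qquad (v:n\to UPX)\ \longmapsto\ U\delta_X\bigl(ULv^\dagger(\modop)\bigr),
\]
which is natural in $X$ by the naturality of $\delta$. In the opposite direction, given $\Delta$ I would use the coequalizer presentation of $LPX$ from Definition~\ref{def:present-fun}: define the $\BA$-morphism $FGUPX\to PTX$ as the adjoint transpose of the function $GUPX\to UPTX$ sending $(\modop\in\Sigma_n,v:n\to UPX)$ to $\sem{\modop}_X(v)$, and then factor through $q_{PX}:FGUPX\to LPX$ to obtain $\delta_X$. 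That these two translations are mutually inverse is a short diagram chase using Remark~\ref{rmk:pres-from-functor} and the construction of $q$.

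The main obstacle is to verify that the coequalizer factorisation in the second direction actually exists. This comes down to showing that the soundness of $E$ with respect to $\Delta$ stipulated in Definition~\ref{def:clogic} is precisely the condition that the map $FGUPX\to PTX$ coequalizes the parallel pair $r_1^\sharp,r_2^\sharp$ of the presentation. One must unwind the description of $E_V$ as the kernel of $q_{FV}$, translate through the adjoint transposes of $F\dashv U$, and match term evaluation in the free $\BA$-algebra with the inductive clauses defining $\sem{s}_h$ in (\ref{equ:semh}); everything else is routine bookkeeping.

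Finally, for the agreement of the semantics, Proposition~\ref{prop:sconcrete-alg-coalg} expresses the concrete semantics as evaluation in the complex algebra $\tilde P^c(X,\xi)$ of Definition~\ref{def:complex-alg}, while the abstract semantics is evaluation in $\tilde P(X,\xi)$ with structure map $P\xi\circ\delta_X$ from (\ref{equ:tildep}). Under the isomorphism of Theorem~\ref{thm:present-fun}, combining Theorem~\ref{thm:present-fun:rmk}(1) with the construction of $\sem{\modop}$ from $\delta$ yields $\modop^{\tilde P(X,\xi)}(a_1,\dots,a_n) = P\xi(\sem{\modop}_X(a_1,\dots,a_n))$, which matches (\ref{equ:complex-alg}). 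Hence $\tilde P(X,\xi)$ and $\tilde P^c(X,\xi)$ are the same algebra, and the unique algebra morphism out of the free algebra determined by a valuation $h$ computes both the abstract and the concrete semantics.
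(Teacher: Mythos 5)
Your proposal is correct and follows essentially the same route as the paper: extract the presentation via Remark~\ref{rmk:pres-from-functor}, pass between $\delta$ and $\Delta$ by evaluating on generators $\modop\in ULFn$ (resp.\ factoring the transpose of $\Delta$ through the coequalizer $q_{PX}$, with soundness of $E$ guaranteeing well-definedness), and conclude semantic agreement by identifying $\tilde P(X,\xi)$ with $\tilde P^c(X,\xi)$ under the isomorphism of Theorem~\ref{thm:present-fun} and invoking \eqref{equ:mng}. You spell out the coequalizer factorisation and naturality checks in more detail than the paper does, but the decomposition and the key ingredients are the same.
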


\begin{proof}
  For the purposes of the proof, write $\sem{-}\sabstract_{(X,\xi,h)}$
  for $\sem{-}_{(X,\xi,h)}$ in Definition~\ref{def:abstract-logic}. As
  before, we denote by $F^cV$ the term algebra over
  $(\Sigma_\BA+\Sigma+V)$-terms quotiented by equations $E_\BA+E$. The
  two statements of the theorem say:
  \begin{enumerate}[(1)]
  \item For each abstract logic $(L,\delta)$ there is a concrete logic
    $(\Sigma,E,\Delta)$ such that, for any left-adjoint $F^a$ of
    $U^a:\Alg(L)\to\Set$, there is concrete isomorphism
    $g:\Alg(L)\to\Alg(\Sigma_\BA+\Sigma,E_\BA+E)$, inducing an
    isomorphism $f:F^c V\to g(F^a V)$, so that for all coalgebras
    $(X,\xi)$ and all formulas $\phi\in\lcal(\Sigma,E)$ we have
    $\sem{f[\phi]}\sabstract_{(X,\xi,h)}=\sem{\phi}\sconcrete_{(X,\xi,h)}$.
  \item For each concrete logic $(\Sigma,E,\Delta)$ there is an
    abstract logic $(L,\delta)$ and a left-adjoint $F^a$ of
    $U^a:\Alg(L)\to\Set$ such that such that for all coalgebras
    $(X,\xi)$ and all formulas $\phi\in\lcal(\Sigma,E)$ we have
    $\sem{[\phi]}\sabstract_{(X,\xi,h)}=\sem{\phi}\sconcrete_{(X,\xi,h)}$.
\end{enumerate}

To prove (1), take the presentation $\langle\Sigma,E\rangle$ of $L$
from Remark~\ref{rmk:pres-from-functor} and let $\Delta$ be given by
$\sem{\modop}:(2^{\arity(\modop)})^X\to 2^{TX}$, $(Y_1, \ldots
Y_{\arity(\modop)})\mapsto \delta_X(\modop(Y_1, \ldots
Y_{\arity(\modop)})$ . 
  %Soundness of the equations $E$ is immediate
  % from the respective definitions. 
  By Theorem~\ref{thm:present-fun}, we have a concrete isomorphism
  $g:\Alg(L)\to\Alg(\Sigma_\BA+\Sigma,E_\BA+E)$, inducing an
  isomorphism $f:F^c V\to g(F^a V)$. Since $g(\tilde
  P(X,\xi)) = \tilde P^c(X,\xi)$ where the two versions of $\tilde
  P$ refer to (\ref{equ:tildep}) and Definition~\ref{def:complex-alg}
  respectively, we have $\sem{f[\phi]}\sabstract_{(X,\xi,h)} =
  \mathit{mng}_h([\phi]) = \sem{\phi}\sconcrete_{(X,\xi,h)}$, where
  the second step is (\ref{equ:mng}).

  For (2), define $L$ as in Remark~\ref{rmk:pres-from-functor} and let
  $\delta_X(\modop(Y_1, \ldots Y_{\arity(\modop)})=\sem{\modop}(Y_1,
  \ldots Y_{\arity(\modop)})$. Since $L$ is not an `absolutely' free
  algebra but quotiented wrt $E$, we need to check that $\delta$ is
  well-defined, but this follows from the equations $E$ being sound,
  see Definition~\ref{def:clogic}. By Theorem~\ref{thm:present-fun},
  we have a concrete isomorphism
  $g:\Alg(L)\to\Alg(\Sigma_\BA+\Sigma,E_\BA+E)$. Choose $F^a$ so that
  $g(F^cV)=F^aV$ and finish the argument as above in item (1).
\end{proof}

\begin{rem}
  To summarise, given a functor $L:\BA\to\BA$ determined by its action
  on finitely generated free Boolean algebras, we can find a
  presentation $\langle\Sigma, E\rangle$ as described in
  Remark~\ref{rmk:pres-from-functor}. This gives us an isomorphism
  between $L$-algebras and $(\Sigma_\BA+\Sigma,E_\BA+E)$-algebras as
  described in Remark~\ref{thm:present-fun:rmk}.
  Conversely, given operations $\Sigma$ and equations $E$ of rank 1,
  we define a functor $L$ as described in
  Remark~\ref{rmk:pres-from-functor} and this gives us, again, an
  isomorphism between $L$-algebras and
  $(\Sigma_\BA+\Sigma,E_\BA+E)$-algebras as described in
  Remark~\ref{thm:present-fun:rmk}. The theorem shows that the logic
  arising from $L$ and the logic arising from $(\Sigma,E)$ are
  equivalent.
\end{rem}

\begin{example}\label{exle:equiv-abs-concr-log}
  Starting from the concrete logic of Example~\ref{exle:bml}, we
  define $(L,\delta)$ as follows. $LA$ is the $\BA$ generated by $\Box
  a, a\in A$ and quotiented with respect to the equations of
  Example~\ref{exle:bml}. To give Boolean algebra homomorphisms
  $\delta_X:LPX\to P\pcal X$ it is enough to describe them on generators,
  which is exactly what $\sem{\Box}$ in \ref{exle:bml} does.

  Conversely, we could start by defining $LA=P\Pow SA$ on finite
  Boolean algebras. This determines $L$ on finitely generated free
  algebras and hence defines a sifted colimit preserving
  functor. Therefore we can present $L$ as in
  Remark~\ref{rmk:pres-from-functor}. This canonical presentation,
  which is made from all (finitary) predicate liftings for $\pcal$, is
  different from the presentation with a single $\Box$, but it
  presents an isomorphic functor. This observation is at the the heart
  of the next section.
\end{example}

\medskip\noindent\textbf{Summary. } The correspondence between
functors and logics gives us the licence to switch at will between the
abstract point of view for which a logic is a pair $(L,\delta)$ and
the concrete point of view for which a logic is given by operations
and equations. We will therefore, in the following, blur the
distinction whenever convenient.

The good functors $L:\BA\to\BA$ for which this correspondence is
available, are those functors for which one of the following
equivalent conditions holds:
\begin{iteMize}{$\bullet$}
\item $L$ preserves sifted colimits,
\item $L$ is determined by its action on finitely generated free
  algebras.
\end{iteMize}
This follows from the general considerations of Part I, but in the
case of $\BA$ one can go further. Using
Proposition~\ref{prop:sifted=filtered-BA}, we can extend this list by
saying that, up to modification of $L$ on the one-element Boolean
algebra, one of the following equivalent condition holds:
\begin{iteMize}{$\bullet$}
\item $L$ preserves filtered colimits,
\item $L$ preserves directed colimits,
\item $LA$ is determined by its action on the finite subalgebras of
  $A$.
\end{iteMize} 

\section{The Finitary Modal Logic of Set-Coalgebras}
\label{sec:ml-set} 
\noindent 
The aim of this section is to associate a modal logic to an arbitrary
functor $T:\Set\to\Set$. As we are interested here in classical
propositional logic the logic will be given by a functor
$L_T:\BA\to\BA$. That is, we are concerned with the following situation
\begin{equation}\label{equ:BA-Set} 
  \xymatrix{ 
    {\BA}  \ar@(dl,ul)[]^{L_T \ } 
    \ar@/_/[rr]_{S}  & &  
    {\Set}\ar@/_/[ll]_{P} \ar@(dr,ur)[]_{\ T}  
  } 
\end{equation} 
where $S$ maps an algebra to the set of its ultrafilters and $P$ is
the contravariant powerset. 
For the readers of Part II, we note that \eqref{equ:BA-Set} is the
instance of \eqref{equ:IndProAlg} with $\ccal$ being the category of
finite Boolean algebras. But in this part, instead of writing arrows
in $\Pro(\BA_\omega)\simeq\Pro(\Set_\omega\op)\simeq\Set\op$, we write
them in $\Set$. Further details of how to translate the notation from
Part II are summarised in the next remark.
\begin{rem}\label{rmk:partii-partiii}
  To apply the results of Part II, instantiate $H=L_T$,
  $\Ind\ccal=\BA$, $\Pro\ccal=\Set\op$, $K=T\op$. We also write
  $P:\BA\to\Set$ for the contravariant functor given by the covariant
  $\Pi:\BA\to\Set\op$ and similarly we write $S:\Set\to\BA$ for the
  contravariant functor given by the covariant
  $\Sigma:\Set\op\to\BA$. Accordingly, the types of the unit and
  counit become $\iota:\Id\to PS$ and $\eps:\Id\to SP$. Similarly, we
  have $\delta:LP\to PT$, $\delta^*:TS\to SL$, $h:SL\to TS$.
\end{rem}
The next definition generalises
Example~\ref{exle:equiv-abs-concr-log}.
\begin{definition}[$(L_T,\delta_T)$] \label{def:LT} Let
  $T:\Set\to\Set$ be any set-functor. We define $L_T$ to be $L_T A =
  PTSA$ on finite BAs. This determines $L_T$ on finitely generated
  free algebras and hence defines a sifted colimit preserving
  functor. For finite $X$ we put $\delta_TX:L_TPX=PTSPX\cong PTX$ and
  extend to arbitrary $X$ as in (\ref{equ:def-delta}).
\end{definition}

\begin{rem}[Bisimulation-somewhere-else]
  Since $\BA$ and $\Stone$ are dually equivalent, the functor
  $L_T:\BA\to\BA$ has a dual $\hat T:\Stone\to\Stone$, which
  simplifies the definition of the same functor in \cite[Def 7, Rmk
  16]{kkp:calco05}. We can associate to any $T$-coalgebra $(X,\xi)$ a
  $\hat T$-coalgebra $SPX\to SPTX\to SL_TPX\to\hat TSPX$. Then two
  states $x_1,x_2\in X$ satisfy the same formulas of $(L_T,\delta_T)$
  iff they are bisimilar (not necessarily in $X$ but) in $SPX$, see
  \cite[Thm 18]{kkp:calco05}.
\end{rem}

This definition applies in particular to all gKPFs, see
Example~\ref{exle:T-Set}, and we are able now to supplement further
examples to Section~\ref{sec:concr-log}. We note that Part II of this
paper does not deal with many-sorted signatures which are required for binary
functors $\BA\times\BA\to\BA$. This has been done in Schr\"oder and
Pattinson~\cite{schr-patt:modular} and for the functorial framework of
this paper in \cite{kurz-petr:cmcs08-j}.
\renewcommand{\Pi}{P}
\renewcommand{\Sigma}{S}

\begin{example}\label{exle:L-BA}
  We describe functors $L:\BA\to\BA$ or $L:\BA\times\BA\to\BA$
  by generators and relations as follows.
\begin{enumerate}[(1)]
\item $L_{K_C}(A)$ is the free $\BA$ given by generators $c\in C$
  and satisfying $c_1\wedge c_2=\bot$ for all $c_1\not=c_2$ and $\bigvee_{c\in
    C} c = \top$.
\item $L_+(A_1,A_2)$ is generated by $[\kappa_1]a_1$,
  $[\kappa_2]a_2$, $a_i\in A_i$ where the $[\kappa_i]$ preserve finite
  joins and binary meets and satisfy $[\kappa_1] a_1\wedge
  [\kappa_2]a_2=\bot$, $[\kappa_1]\top\vee [\kappa_2]\top=\top$,
  $\neg[\kappa_1]a_1=[\kappa_2]\top\vee[\kappa_1]\neg a_1$,
  $\neg[\kappa_2]a_2=[\kappa_1]\top\vee[\kappa_2]\neg a_2$.
\item $L_\times(A_1,A_2)$ is generated by $[\pi_1]a_1$,
  $[\pi_2]a_2$, $a_i\in A_i$ where $[\pi_i]$ preserve Boolean
  operations.
\item $L_\Pow(A)$ is generated by $\Box a$, $a \in A$, and $\Box$
  preserves finite meets.
\item $L_\hcal(A)$ is generated by $\Box a$, $a \in A$ (no equations).
\end{enumerate}
For the semantics, we define Boolean algebra morphisms $\delta_T$
\begin{enumerate}[(1)]
\item $L_{K_C}\Pi X \to \Pi C$ \ by \ $c\mapsto \{c\}$, 
\item $L_+(\Pi X_1,\Pi X_2) \to \Pi(X_1+X_2)$ \ by \ $[\kappa_i]a_i\mapsto a_i$, 
\item $L_\times(\Pi X,\Pi Y) \to \Pi(X_1\times X_2)$ \ by \ $[\pi_1]a_1\mapsto a_1\times X_2$, $[\pi_2]a_2\mapsto X_1\times a_2$, 
\item $L_\Pow \Pi X\to \Pi\Pow X$ \ by \ $\Box a \mapsto \{b\subseteq
  X\mid b\subseteq a\}$,
\item $L_\hcal \Pi X\to \Pi\hcal X$ \ by \ $\Box a \mapsto \{s\in \hcal X
  \mid a \in s \}$.
\end{enumerate}
and extend them inductively to $\delta_T:L_T\Pi\to\Pi T $ for all gKPF
$T$. To be precise, we will for the moment denote by
$(L'_T,\delta'_T)$ the $(L_T,\delta_T)$ given by the presentations in
this example and reserve the notation $(L_T,\delta_T)$ for the logics
given by Definition~\ref{def:LT}. We need to show that
$(L'_T,\delta'_T)$ is equivalent in the sense of
Theorem~\ref{thm:equiv-abs-concr-log} to $(L_T,\delta_T)$, in
other words, that the presentations of this example indeed present the
logics of Definition~\ref{def:LT}. This amounts to showing that
$(\delta'_T)_X: L'_T\Pi X\to \Pi TX$ is an isomorphism for all finite
sets $X$. It is exactly here where the machinery presented in this
paper needs to be supplemented by additional work depending on the
concrete presentation at hand. In our case this is essentially known:
(1)-(3) are slight variations of cases appearing in
Abramsky~\cite{abramsky:dtlf}, (4) is in
Abramsky~\cite{abramsky:cooks-tour}, and $\delta_X$ in (5) is given by
the identity on $2^{2^{2^X}}$.

For gKPFs excluding $\hcal$, the maps 
\begin{equation}\label{eq:h2}
h_A:SLA\to TSA
\end{equation} 
from (\ref{eq:h}) have been described by Jacobs~\cite[Definition
5.1]{jacobs:many-sorted}. We detail the definitions of the following
two cases.
\begin{enumerate}[(1)]\setcounter{enumi}{3}
\item $h_A:\Sigma L_\Pow A\to \Pow\Sigma A$ maps $v\in\Sigma L_\Pow A$
  to $\{u\in\Sigma A \mid \Box a \in v\Rightarrow a \in u\}$.
\item $h_A:\Sigma L_\hcal A\to \hcal\Sigma A$ maps $v\in\Sigma L_\hcal
  A$ to $\{\hat a\in 2^{\Sigma A} \mid \Box a\in v\}$. 
\end{enumerate}
\end{example}

\begin{rem}\hfill
\begin{enumerate}[(1)]
\item In modal logic, given a modal algebra $\alpha:L_\pcal A\to A$,
  one defines a Kripke frame with carrier $SA$ and accessibility
  relation $R_\Box$ given by $vR_\Box u \ \Leftrightarrow \ \forall
  a\in A{.}(\Box a \in v\Rightarrow a \in u)$, see \cite[Def
  5.40]{brv:ml}. To define $R_\Box$ in this way is the same as to give
  $h_A$ as in (4) above, only that $h_A$ is independent of any given
  algebra. More precisely, we obtain $R_\Box$ as $h_A\circ
  S\alpha:SA\to\pcal SA$.
\item Whereas \cite[Def 1]{kkp:calco05} only formulates a condition
  on $h$, (\ref{equ:def-h}) gives us a systematic way of calculating
  $h$ from $\delta$. For finite $A\in\BA$, denoting the units of the
  adjunction by $\iota:\Id\to PS$ and $\eps:\Id\to SP$, we have that
  $h_A$ is given as an arrow in $\Set$ by
\begin{equation}\label{equ:h-finite}
 \Sigma LA
  \stackrel{(\Sigma L\iota_A)^\circ}{\too}\Sigma L\Pi\Sigma A \stackrel{(\Sigma\delta_{\Sigma A})^\circ}{\too}\Sigma\Pi T\Sigma A \stackrel{(\eps_{T\Sigma A})^\circ}{\too} T\Sigma A
\end{equation}
Here we use that $T$ preserves finite sets and hence the arrows above
are isos and we can take their inverse, denoted by $^\circ$. Going
through (\ref{equ:h-finite}) explicitely will yield (4) and (5) above
for finite $A$ and it then turns out that (4) and (5) also work for
all $A$.
\item Note that \eqref{equ:h-finite} is the the inverse of 
\begin{equation}\label{equ:def-delta*2}
\delta^*_A \ = \ T\Sigma A \stackrel{(\eps_{T\Sigma A})}{\too}
\Sigma\Pi T\Sigma A \stackrel{(\Sigma\delta_{\Sigma A})}{\too}
\Sigma L\Pi\Sigma A \stackrel{(\Sigma L\iota_A)}{\too}
 \Sigma LA
\end{equation}
which already appeared as \eqref{equ:def-delta*}.
\end{enumerate}
\end{rem}

\noindent We now come to the main theorem of Part III. Recall Definition~\ref{def:abstract-logic} of a logic for $T$-coalgebras and the global consequence relation \eqref{eq:global-consequence}.

\begin{thm}\label{thm:jons-tars}
  Let $T:\Set\to\Set$ preserve finiteness of sets and weakly preserve
  cofiltered limits. Then $T$ has a sound and strongly complete modal
  logic.
\end{thm}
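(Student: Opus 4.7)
The plan is to establish soundness routinely and to prove strong completeness by a canonical model (Lindenbaum algebra) construction that hinges on Theorem~\ref{thm:IndPro} (the abstract J\'onsson--Tarski theorem from Part~II).

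\textbf{Soundness} follows immediately from the existence of $\delta_T:L_TP\to PT$. By Theorem~\ref{thm:equiv-abs-concr-log} one may work with a concrete presentation of $L_T$; since $\delta_T$ factors through the coequalizer presenting $L_T$, every equation of the presentation is valid in each complex algebra $\tilde P(X,\xi)$, whence every derivable sequent holds in every $T$-coalgebra by Proposition~\ref{prop:sconcrete-alg-coalg}.

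\textbf{Strong completeness.} Suppose $\Gamma\not\vdash\phi\lesssim\psi$. Form the Lindenbaum $L_T$-algebra $(A,\alpha)$ as the quotient of the free $L_T$-algebra $F^aV$ by the smallest $L_T$-algebra congruence that forces $[\phi_i]\le[\psi_i]$ for every $\phi_i\lesssim\psi_i\in\Gamma$; in $A$ one then has $[\phi]\not\le[\psi]$. Next, verify the hypotheses of Theorem~\ref{thm:IndPro} with $H=L_T$, $K=T\op$, $\ccal=\BA_\omega$: (i) $L_T$ preserves sifted, hence filtered, colimits by Definition~\ref{def:LT}; (ii) the very definition $L_TC=PTSC$ on finite Boolean algebras is the isomorphism \eqref{equ:HK-H} so $L_T$ is determined by $T$ on $\ccal$; (iii) $T$ preserves finiteness of sets, which dualises to $K$ restricting to $\ccal$ as in \eqref{equ:HK-restrict}; (iv) $T$ weakly preserves cofiltered limits, which dualises to $K$ weakly preserving filtered colimits, using the translation of notation from Remark~\ref{rmk:partii-partiii}. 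Applying Theorem~\ref{thm:IndPro}(2) to $(A,\alpha)$ produces $h_A:SL_TA\to TSA$ such that $\iota_A:A\to PSA$ is an $L_T$-algebra morphism from $(A,\alpha)$ into $\tilde P(SA,\,h_A\circ S\alpha)$.

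Now set $\xi=h_A\circ S\alpha:SA\to TSA$ and define the canonical valuation $v:V\to 2^{SA}$ by composing the insertion $V\to UA$ with $U\iota_A$. A straightforward induction on formulas, using that $\iota_A$ is an $L_T$-algebra morphism together with naturality of $\delta_T$, yields
\[
\sem{\chi}_{(SA,\xi,v)}=U\iota_A([\chi])\qquad\text{for every }\chi\in\lcal(\Sigma,E).
\]
Since $\iota_A$ is injective for Boolean algebras by Stone's representation theorem, $[\phi]\not\le[\psi]$ in $A$ gives $(SA,\xi,v)\not\models\phi\lesssim\psi$, while $[\phi_i]\le[\psi_i]$ gives $(SA,\xi,v)\models\Gamma$. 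Hence $\Gamma\not\models\phi\lesssim\psi$, completing strong completeness.

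The main obstacle is not the canonical-model argument itself, which is a standard Lindenbaum construction, but the careful bookkeeping needed to invoke Part~II in the Part~III notation: one must check that the two hypotheses on $T$ really do match conditions \eqref{equ:HK-restrict} and the weak-colimit-preservation clause of Theorem~\ref{thm:IndPro} under the contravariant/covariant switch of Remark~\ref{rmk:partii-partiii}, and that the Lindenbaum $L_T$-algebra structure on $A$ is well defined (so that one really gets an $L_T$-algebra morphism $\iota_A$, not merely a $\BA$-morphism). Once these ingredients are in place, the interpretation lemma $\sem{\chi}_{(SA,\xi,v)}=U\iota_A([\chi])$ is an automatic consequence of $\iota_A$ being an $L_T$-morphism and of the definition of $\tilde P$ in \eqref{equ:tildep}.
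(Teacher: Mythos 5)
Your proposal is correct and follows essentially the same route as the paper's own (much terser) proof: form the Lindenbaum quotient of the free $L_T$-algebra by $\Gamma$, apply Theorem~\ref{thm:IndPro} to obtain a $T$-coalgebra structure $h_A\circ S\alpha$ on $SA$ making $\iota_A$ an $L_T$-algebra morphism, and use injectivity of $\iota_A$ for Boolean algebras to separate $\Gamma$ from the unproved sequent. The extra bookkeeping you supply (verifying \eqref{equ:HK-H}, \eqref{equ:HK-restrict} and weak preservation under the dualisation of Remark~\ref{rmk:partii-partiii}, and the interpretation lemma) is exactly what the paper leaves implicit.
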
 

\begin{proof}
  Suppose $\Gamma\not\vdash\phi$. Let $A$ be the free $L_T$ algebra
  quotiented by $\Gamma$. By Theorem~\ref{thm:IndPro}, there is a
  $T$-coalgebra on $SA$ such that the injective $\iota_A:A\to PSA$ is
  an $L_T$-algebra morphism.  $\iota_A$ maps all propositions in
  $\Gamma$ to all of $SA$, but $\phi$ only to a proper subset.
  Therefore there is an element in $SA$ satisfying $\Gamma$ and
  refuting $\phi$.
\end{proof}
 
\begin{rem}\hfill
\begin{enumerate}[(1)] 
\item The condition of weak preservation of cofiltered limits is
  elegant, but going back to Theorem~\ref{thm:IndPro} we find that it
  is enough to ask that we can find $h_A:SLA\to TSA$ such that
\begin{equation}\label{eq:delta-h2}
\delta^*_A\circ h =\id_A
\end{equation} 
where $\delta^*$ is as in \eqref{equ:def-delta*2}. It follows from
Theorem~\ref{thm:IndPro} that under \eqref{eq:delta-h2} strong
completeness holds without the conditions of $T$ restricting to finite
sets or weakly preserving filtered colimits. This version of the
theorem was first proved as \cite[Theorem 3]{kkp:calco05}, although
\cite{kkp:calco05} only states the completeness, not the strong
completeness consequence, of the J\'onsson-Tarski-style representation
theorem. Theorem~\ref{thm:jons-tars} extends \cite[Theorem
3]{kkp:calco05} first by the construction of the logic $L_T$ from the
functor $T$ and second by giving a sufficient condition directly in
terms of $T$ for this logic to be strongly complete.
\item The property of logics expressed in the J\'onnson-Tarski-style
  representation theorems \cite[Theorem 3]{kkp:calco05},
  Theorem~\ref{thm:IndPro} and Theorem~\ref{thm:jons-tars}, known as
  canonicity in modal logic, is stronger than strong completeness.  It
  is also worth noting that these representation theorems imply strong
  completeness wrt the global consequence relation which is a stronger
  property in general than strong completeness wrt to the local
  consequence relation. For a comparison of these notions of
  canonicity and strong completeness we refer to Litak
  \cite{litak:phd}.
\item Schr\"oder and Pattinson~\cite{schr-patt:strong-completeness}
  use similar but weaker conditions to prove strong completeness (but
  not canonicity) wrt local consequence. They give a number of
  important examples of such logics for functors $T$ that do not
  restrict to finite sets.
\item The weak preservation of cofiltered limits means, in particular,
  that all projections in the final sequence are onto. The only common
  example of a finite set preserving functor we are aware of that does
  not satisfy this condition is the finite powerset functor, see
  \cite{worrell:final-sequence}. And indeed, standard modal logic is
  strongly complete wrt Kripke frames, but not wrt finitely branching
  ones.
\item The probability distribution functor \cite{vink-rutt:prob-bisim}
  does not preserve finite sets and modal logics for probabilistic
  transition systems, see eg \cite{heif-mong:prob-ml}, are not
  strongly complete. A similar situation occurs for $TX=K\times X$
  where $K$ is an infinite constant.
\item In contrast, we can extend our result to functors $X\mapsto
  (TX)^K$ for infinite $K$ if $T$ preserves finite sets. Indeed, $T^K$
  is a cofiltered limit of the functors $T^{K_i}$ where $K_i$ ranges
  over the finite subsets of $K$. We can now apply the theorem to
  obtain logics $L_{T^{K_i}}$ and then extend the result to the
  colimit of the $L_{T^{K_i}}$ and the limit of the $T^{K_i}$. This
  allows us to include functors such as $(\Pow X)^K\iso \Pow(K\times
  X)$, $K$ infinite (which give rise to labelled transition systems).
\end{enumerate} 
\end{rem}

\section{Conclusion} 
 
\noindent\textbf{Summary } 
The purpose of the paper was to associate a finitary modal logic to a
functor $T$, so that the logic is strongly complete wrt
$T$-coalgebras.  We took up the idea, well-established in domain
theory \cite{abra-jung:dt}, that a logic for the solution of a domain
equation $X\iso TX$ is given by a presentation of the dual $L$ of $T$.
To obtain a logic from $L$, one presents $L$ by operations and
equations and we characterised those functors on a variety that have a
presentation (Theorem~\ref{thm:present-sift}) in Part I. This result
is based on the fundamental role that sifted colimits play in the
category theoretic analysis of universal algebra, see
\cite{arv:varieties}.

\pskip To obtain strong completeness of the logic, we showed in Part
II how to present $L$-algebras as $T$-coalgebras,
Theorem~\ref{thm:IndPro}. This can be considered as the
J{\'o}nsson-Tarski Theorem for $L$-algebras and $T$-coalgebras.

\pskip Part III shows how an arbitrary $T:\Set\to\Set$ gives rise to a
logic $L_T$. By Part I, we know that $L_T$ has a presentation and,
therefore, corresponds to a modal logic given by operations and
equations. Applying the representation theorem of Part II, we obtain
that under additional conditions on $T$, this logic is strongly
complete for $T$-coalgebras.

\pskip An interesting point is that we do not need the assumption that
$T$ is finitary. This assumption is powerful when working with
$T$-algebras, but it is much less so for $T$-coalgebras. Similarly, we
do not need that $T$ preserves weak pullbacks. Each of these
assumptions would exclude fundamental examples.

\pskip\textbf{Further work } 
An important aspect of this work is that it makes use of the notion of
the presentation of a functor in order to separate syntax and
semantics. For example, the strong completeness proof of
Theorem~\ref{thm:jons-tars} is conducted---via
Theorem~\ref{thm:IndPro}---in terms of abstract category theoretic
properties of the logic $(L,\delta)$ and is independent of a choice of
concrete presentation. This approach was also used in
\cite{kurz-rosi:calco07}, which proves a Goldblatt-Thomason style
theorem for coalgebras, and in \cite{kurz-leal:mfps09}, which compares
and translates logics given by predicate liftings and Moss's
coalgebraic logic. This is based on the observation that the notion of
a coalgebraic logic $(L,\delta)$ also accounts for Moss's logic and
makes it amenable to a study via Stone duality. This idea was also
used in \cite{kkv:aiml08} to give a completeness proof for an
axiomatisation of the finitary Boolean version of Moss's logic.

\pskip Another important feature of our approach, which goes back to
\cite{bons-kurz:fossacs06}, is that it is modular in the sense that
the presentation of $\Alg(L)$ is obtained by composing a presentation
of the base category with a presentation of $L$, see
Theorem~\ref{thm:present-fun}.  This can be extended to a formalism
that allows to compose the presentation of $L_1L_2$ from presentations
of $L_1$ and of $L_2$ \cite{kurz-petr:cmcs08-j}. This requires to move
to many-sorted universal algebra and \cite{kurz-petr:cmcs08-j} also
investigates further applications of the many-sorted generalisation to the
semantics of first-order logic and presheaf models of name-binding.

\pskip \cite{kurz-petr:domains9} exploits that the nominal algebra
\cite{gabb-math:nomuae} of Gabbay and Mathijssen and the nominal
equational logic \cite{clou-pitt:nom-equ-log} of Clouston and Pitts
gives rise to theories which correspond to sifted colimit preserving
monads on the category $\mathsf{Nom}$ of nominal sets and can thus be
viewed as equational theories of many-sorted set-based universal
algebra.

\pskip Myers \cite{myers:phd} extends our work on presentations of
Part II to other notions of presentations of functors on varieties
and, importantly, starts the systematic investigation of connecting
properties of presentations with properties of algorithms checking
eg for bisimilairty of process expressions.

\end{document}